\DeclareFontFamily{T1}{calligra}{}
\DeclareFontShape{T1}{calligra}{m}{n}{<->s*[1.44]callig15}{}
\DeclareMathAlphabet\mathcalligra   {T1}{calligra} {m} {n}
\DeclareMathAlphabet\mathzapf       {T1}{pzc} {mb} {it}
\DeclareMathAlphabet\mathchorus     {T1}{qzc} {m} {n}
\DeclareMathAlphabet\mathrsfso      {U}{rsfso}{m}{n}
\newcommand{\removelatexerror}{\let\@latex@error\@gobble}
\newtheorem{theorem}{Theorem}
\newtheorem{corollary}{Corollary}{}
\newtheorem{definition}{Definition}[]
\def\ps@IEEEtitlepagestyle{%
	\def\@oddfoot{\mycopyrightnotice}%
	\def\@oddhead{\hbox{}\@IEEEheaderstyle\leftmark\hfil\thepage}\relax
	\def\@evenhead{\@IEEEheaderstyle\thepage\hfil\leftmark\hbox{}}\relax
	\def\@evenfoot{}%
}
\def\mycopyrightnotice{%
	\begin{minipage}{\textwidth}
		\centering \scriptsize
		This article has been accepted in IEEE Transactions on Systems, Man, and Cybernetics: Systems Journal © 2023 IEEE. Personal use of this material is permitted. Permission from
		IEEE must be obtained for all other uses, in any current or future media, including reprinting/republishing this material for advertising or promotional purposes, creating new collective works, for resale or redistribution to servers or lists, or reuse of any copyrighted component of this work in other works. This work is freely available for survey and citation.
		
	\end{minipage}
}
\begin{document}
	\title{An AI-Driven VM Threat Prediction Model for Multi-Risks Analysis-Based Cloud Cybersecurity }

\author{Deepika~Saxena,~\textit{Member, IEEE,}~Ishu~Gupta,~\textit{Member, IEEE,}~Rishabh~Gupta,\\~Ashutosh~Kumar~Singh,~\textit{Senior Member, IEEE,}~and~Xiaoqing~Wen,~\textit{Fellow, IEEE}
	\IEEEcompsocitemizethanks{\IEEEcompsocthanksitem D. Saxena and R. Gupta are with Department of Computer Science \& Engineering, The University of Aizu, Japan. E-mail: 13deepikasaxena@gmail.com,  rishabhgpt66@gmail.com.			
		\\  
		I. Gupta is with International Institute of Information Technology (IIIT) Bangalore, India. E-mail:ishugupta23@gmail.com\\
		A. K. Singh is a Director with Indian Institute of Information Technology Bhopal, India and a Professor with the Department of Computer Applications, National Institute of Technology, Kurukshetra, India. E-mail:  ashutosh@nitkkr.ac.in \\		
		Xiaoqing Wen is with the Department of Creative Informatics and the
		Graduate School of Computer Science and Systems Engineering, Kyushu
		Institute of Technology, Fukuoka 8208502, Japan E-mail:
		wen@cse.kyutech.ac.jp
		}}

\maketitle


\begin{abstract}
	
	Cloud virtualization technology, ingrained with physical resource sharing, prompts cybersecurity threats on users' virtual machines (VM)s due to the presence of inevitable vulnerabilities on the offsite servers. Contrary to the existing works which concentrated on reducing resource sharing and encryption/decryption of data before transfer  for improving cybersecurity which raises computational cost overhead, the proposed model operates diversely for efficiently serving the same purpose. This paper proposes a novel \textbf{M}ultiple \textbf{R}isks {A}nalysis based VM \textbf{T}hreat \textbf{P}rediction \textbf{M}odel (\textbf{MR-TPM}) to  secure computational data and minimize adversary breaches by proactively estimating the VMs threats. It considers multiple cybersecurity risk factors associated with the configuration and management of VMs, along with analysis of users' behaviour. All these threat factors are quantified for the generation of respective risk score values and fed as input into a machine learning based classifier to estimate the probability of threat for each VM. The performance of MR-TPM is evaluated using benchmark Google Cluster and  OpenNebula VM threat traces. The experimental results demonstrate that the proposed model efficiently computes the cybersecurity risks and learns the VM threat patterns from historical and live data samples. The deployment of MR-TPM with existing VM allocation policies reduces cybersecurity threats up to 88.9\%.	  
\end{abstract}

\begin{IEEEkeywords}
  Hypervisor vulnerability, Network-cascading, Risk analysis, Side-channel, Unauthorized data access.
\end{IEEEkeywords}

%

\section{Introduction}
\IEEEPARstart{C}{ybercrimes} are gobbling up the utility of the cloud services for the beneficiaries, including Cloud Service Providers (CSP)s as well as the end users. According to the estimation of Norton Security, in 2023,  cybercriminals  will be breaching 33 billion records per year \cite{cloud2021security}. Also, it has been reported that the misconfiguration and mismanagement associated with the virtualization technology at the cloud platform are the topmost causes of leakage of terabytes of sensitive data of millions of cloud users across the world  \cite{cloud2020security}. Though the CSPs employ sharing of physical resources among multiple users in the view of maximizing the revenues   \cite{saxena2023performance, singh2021quantum, saxena2021op, saxena2022intelligent, gupta2023secom} the  discrepancies and unpatched susceptibilities developed during virtualization, produce misconfigured VMs and hypervisors, expediting the occurrence of cyberattacks. A malicious user may initiate a number of VMs and exploit the misconfigured or vulnerable VMs in multiple ways  to impose a threat on a target VM \cite{9269375, saxena2023sustainable}. Moreover, the vulnerability of hypervisor due to misconfigured virtualization, devastates the cybersecurity by acquiescing all the coresident VMs to be compromised effortlessly \cite{win2017big}. The mismanagement during physical resource distribution yields co-residency of vulnerable VMs and malicious user VM appealing security threats such as leakage of user's confidential data, hampering of data, unauthorized access via insecure interfaces, hijacking of accounts, etc. \cite{han2017using, saxena2021osc, swain2023ai, singh2023bio, saxena2020security}.  Therefore, the key challenge for the CSP is: How to minimize the cybersecurity threats due to misconfiguration and mismanagement of shared resources on a cloud platform?

 
\subsection{ Related Work}
 The considerable works presented for preserving cybersecurity of computational data via VM allocation have focused on both \textit{defensive strategies} as well as \textit{preventive strategies}. The defensive strategies include minimization of resource sharing by reducing the number of users per server \cite{han2017using}, \cite{singh2019secure}, raising the difficulties for achieving co-residency \cite{han2016game}, \cite{juma2018overhead}, and eliminating side-channel based cyberthreats \cite{liang2017mitigating}. While some other researchers have provided \textit{preventive strategies} merely by
  periodic migration of VMs  \cite{saxena2021securevmp}, \cite{yang2018interference}. Levitin et al. \cite{levitin2020co} have presented a method to resist co-residence data theft attacks and improve service reliability by incorporating threshold voting-based N-version programming (NVP). Wu et al. \cite{wu2020secure} presented    a secure and efficient outsourced K-means clustering (SEOKC) scheme for cloud data protection by applying a fully homomorphic encryption with the ciphertext packing technique to attain parallel computation without any excess cost. This scheme preserves data privacy by furnishing database security,  privacy of clustering results, and hidden data   access patterns.  
  Zhang et al. \cite{zhang2019double} presented a double-blind anonymous evaluation-based trust   model which allows suitable matching between  anonymous users and service providers and employed node checking to detect malicious behaviour.   A Previously-Selected-Servers-First (PSSF) policy was proposed in \cite{han2017using} for minimization of exposure of benign VMs to malicious ones. Every server maintained a record of a list of users whose VMs were ever hosted on it. The previously assigned servers that have ever hosted VMs of an old user are considered first for allocation of their new VMs. If no such server exists, then a server with more resource capacity among the remaining servers, is considered for VM placement. Miao et al. \cite{miao2018vm} improved PSSF by adding a rule that a new VM should be co-located with the user VM to whom it is already co-resident. A hierarchical correlation model for analyzing and estimating   reliability, performance, and power
  consumption of a cloud service is proposed in \cite{qiu2015hierarchical} to locate common causes of co-located multilple VM failures sharing multicore CPUs. 
   
 \par
SEA-LB \cite{singh2019secure} allocates VMs considering  minimum power consumption and side-channel attacks with maximum resource utilization by applying modified genetic algorithm approach. The security is provided by minimizing the number of shared servers at the cost of resource utilization. Saxena et al. \cite{saxena2020security} presented a security embedded resource allocation (SEDRA) model in which the performance of network traffic and inter-VM links are considered to detect and mitigate VM threats by utilizing a random tree classifier. Han et al. \cite{han2016game} proposed a two-player game-based defence mechanism against side-channel attack, where the potential differences between the attackers' and legal users'  behaviour were examined by using clustering and semi-supervised learning techniques for the classification of users. As a result, the attacker's efficiency of achieving co-residency with a target VM raised drastically, thus denying an attack on computational data executing within a VM. A data security risk analysis based VM placement is discussed in \cite{han2017reducing}, where a secure and multi-objective VM allocation is formulated and solved by applying an evolutionary optimization. A Vickrey Clarke-Groves bidding mechanism based defence system was presented in \cite{zhang2012incentive} to maximise the difficulty for the adversary to locate the target VM. 

\subsection{Our Contributions}
In the light of the aforementioned approaches, it is revealed that  rigorous control over VM-centered cybercrimes is still in the infancy stage which marks the need to proactively estimate the intensity of VM threats in real-time. Since, machine learning algorithms are capable of extracting and learning useful patterns from known malicious activities rapidly by profiling devices such as VMs and servers, and understanding regular activities, it can intelligently identify previously unknown forms of malware and help protect VMs from potential attacks.  Owing to the effective machine-learning capabilities of Extreme Gradient Boosting (XGB) approach including handling missing values, parallelization, distributed computing, and cache optimization, we have devised an XGB inspired VM threat prediction model. Correspondingly, a \textbf{M}ultiple \textbf{R}isks {A}nalysis based VM \textbf{T}hreat \textbf{P}rediction \textbf{M}odel (MR-TPM) is proposed that predicts cyberthreats associated with VMs misconfiguration and their insecure allocation at the cloud platform. To the best of the authors' knowledge, such a proactive VM threat prediction model by considering multiple security risk factors for alleviation of cyberthreats, is presented for the first time. The key contributions are fourfold: 
\begin{itemize}
	\item A novel concept of multiple risks analysis based  cybersecurity pertaining to VMs,  is proposed to maximize the security of computational data executing on VMs. Also, the  ill-effects of misconfiguration and insecure VM management are minimized by considering the intended multiple risk factors.
	\item Quantification and assessment of all the considered security threat factors for the periodic training of the newly developed artificial intelligence (AI) driven VM threat prediction model is introduced.
	\item Implementation and evaluation of the proposed model using real VM threat traces  reveals that MR-TPM predicts threats with precise accuracy and helps to mitigate them before the occurrence.
	\item Deployment of the proposed model with existing VM placement policies demonstrates its compatability and applicability in improving the security of user data during execution by exploiting and analysing multiple VM risks for threat prediction. Additionally, its workload prediction component helps to optimize resource utilization, power consumption substantially by minimizing the number of active servers.

\end{itemize}
%
%
%
%

 A bird eye view of the proposed model is shown in Fig. \ref{fig:bird-eye-view}, where multiple types of VM security attack factors ($\{{R}_1, {R}_2, ..., {R}_n\} \in {R}$) are gathered, quantified, and analysed to periodically train a machine learning based VM threat estimator for accurate prediction of future threats on VMs. 
\begin{figure}[!htbp]
	\centering
	\includegraphics[width=0.95\linewidth]{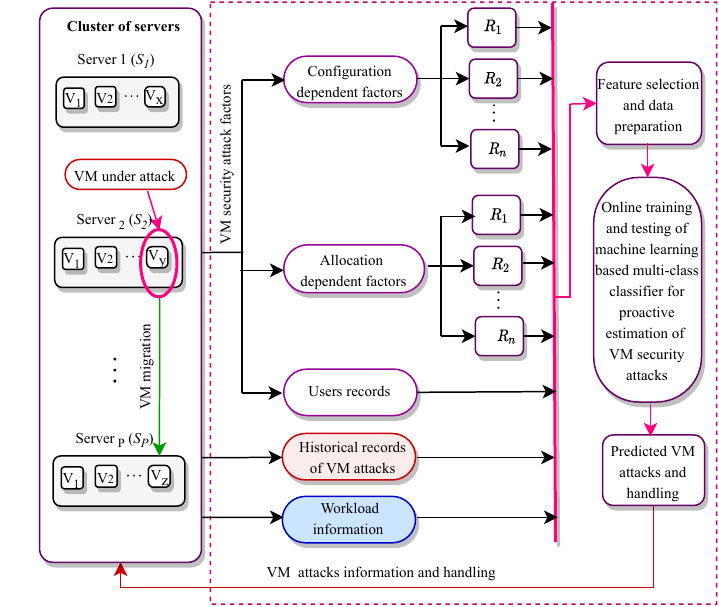}
	\caption{A bird eye view of MR-TPM}
	\label{fig:bird-eye-view}
\end{figure}

\textit{Organization}: The paper is structured as follows: Section II discusses the problem formulation. A detailed elaboration of proposed MR-TPM is conferred in Section III. The multiple cyberthreat factors associated with VMs, including user behaviour analysis, configuration-dependent factors, and allocation-dependent factors, are entailed in Section IV, Section V, and Section VI, respectively. The operational design and complexity of MR-TPM is presented in Section VII. The performance evaluation followed by conclusive remarks and the future scope of the proposed work are presented in Section VIII and  Section IX, respectively. Table \ref{table:notation} shows the list of symbols with explanatory terms used throughout the paper.
\begin{table}[!htbp]
	\centering 	
	\caption[Table caption text] {Notations with explanatory terms}  
	\label{table:notation}
	\resizebox{0.970\textwidth}{!}{\begin{minipage}{\textwidth}
			\begin{tabular}{|l|}
				\hline 
				
				$S$: server, $V$: VM, $U$: user, 
				$P$: number of servers, $Q$ : number of VMs,\\ $M$: number of users,
				$\omega$: mapping among server, VM, user \\
				${R}$: security risk factor,
				${L}$:VM vulnerability,  ${H}$: Hypervisor vulnerability\\
				${S}^{Hyp\_scor}$: Server's hypervisor own vulnerability, $\Xi$: Threat,\\
				${C}$: Co-residency effects, ${N}$: Network cascading effects,\\
				
				$ W^p$: predicted workload, 
				$\mathds{F}$: features used for prediction,\\
				$\mathds{BL}$: Base Learners in XGB,
				$\Theta$:unauthorized access,
				$Rq$: job request,\\
				$\mathds{H^\ddagger}$: record of malicious actions,
				$\mathds{P}(\Xi)$: probability of threat \\
				$RU$: resource utilization, $PW$: power consumption,
				$\mathchorus{M_{c}}$: migration cost\\
				$\mathds{CC}$: status of VM after migration, $\mathds{G}$:status of server after migration\\
					$\mathds{D}(S_k, S_j)$: distance between servers $S_k$ and $S_j$\\
			
				\hline
			\end{tabular}
	\end{minipage}}
\end{table} 
\section{Problem Formulation}
A cloud datacenter environment is considered where  multiple users requests for execution of their workloads or applications. The users can be categorised into legitimate (normal) and malicious (threat-imposing) users. During workload execution, the inter-dependent VMs need to communicate and exchange information to complete the application execution. However, some malicious user VMs may intrude this operation and seek for the security loopholes to  exploit various opportunities for launching  successful threats to legitimate users' VMs  for stealing sensitive information via an unauthorized access. The security of VMs is compromised by exploiting either configuration discrepencies of VMs and associated host servers or insecure allocation and mismanagement of VMs. Accordingly, a problem configuring research assumptions and design goals is formulated in the following subsections. 
\subsection{Assumptions}
The assumptions addressing conditions for VM threats and the capabilities of malicious user VMs during workload distribution and execution are as follows:

\begin{itemize}
	\item Only CSP decides mapping between VMs and servers, and it may or may not have the knowledge of legitimate and malicious VMs.
	
	\item Each active VM belongs to one user only. However, the user can have multiple number of VMs over time. 
	
	\item Malicious user may run one or multiple VMs to exploit means of security escape for imposing a threat on target VM(s). The VM threats can be executed in three ways: one-to-one (one specific malicious VM attacks one target VM), one-to-many (one specific malicious VM attacks multiple target VMs in networking), and many-to-many (group of malicious VMs attack many target VMs).  
	 
	\item  VM(s) are migrated either to handle over/under-load on the source server or to protect them from malicious activity only. Otherwise, the VM is assumed to run on the same server until the user terminates it.
\end{itemize}
\subsection{Problem statement and Design Goals}
Specifically, the problem is to develop a VM threat prediction model which is trained with data samples  considering $n$ probable risk factors addressing security loopholes that estimates VM(s) security threats proactively to improve cybersecurity during cloud workload processing. Based on the aforementioned problem assumptions and statement,  the design goals of the proposed model are as follows:
\begin{itemize}
		\item To develop a machine learning-driven model that will determine VM threats prior to occurrence in real-time. This model must not effect the efficiency of VM management and it must be adaptable and compatable for operation with any VM allocation policies. 
 \item To generate a knowledge database for training of the corresponding VM threat predictor by identifying and computing risk score values for all the probable security factors associated with VM(s). 
 	\item   To accurately detect security threats on legitimate VM(s)  due to presence of malicious VM and vulnerabilities of VM(s) configuration and management. 
\end{itemize}

\section{Proposed VM Threat Prediction Model}
\begin{figure*}[!htbp]
	\centering
	\includegraphics[width=0.7\linewidth]{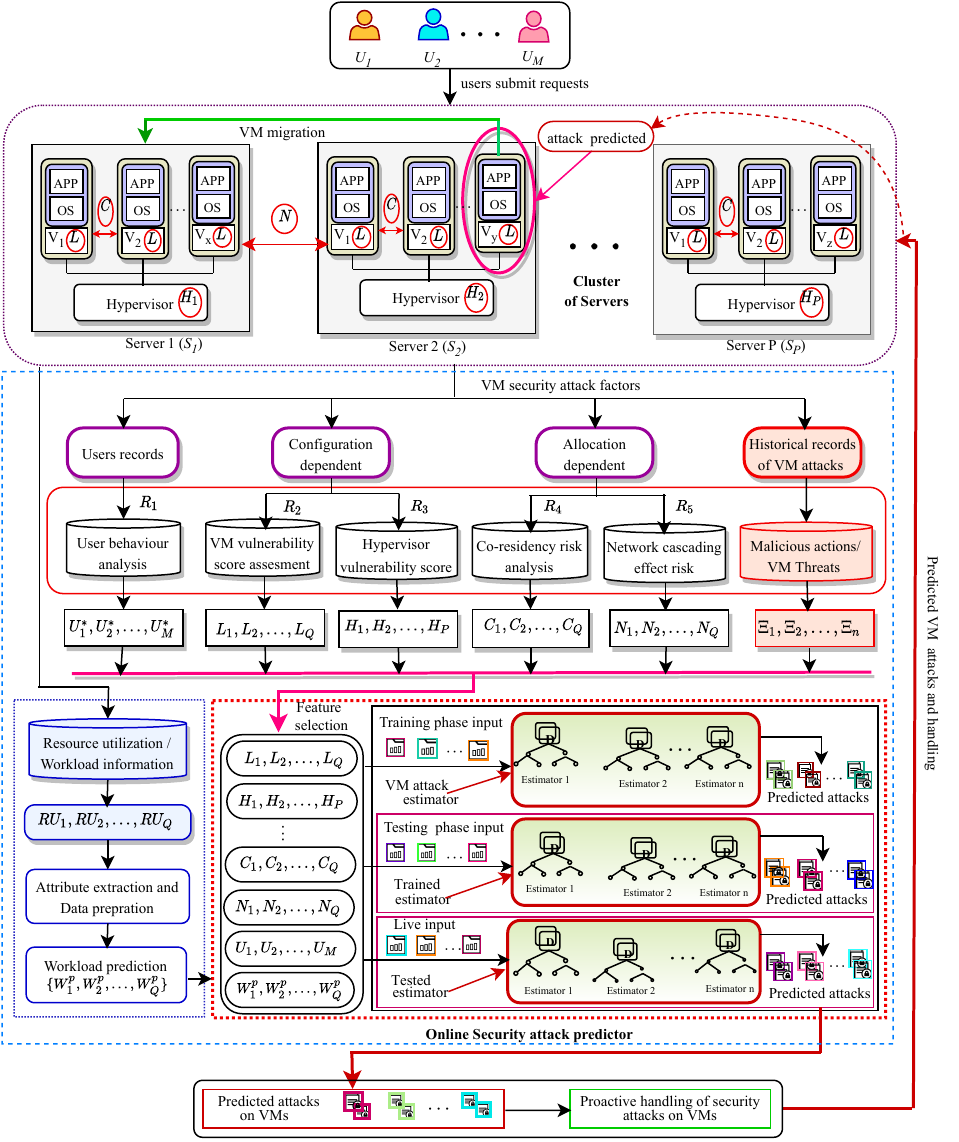}
	\caption{  {M}ultiple {R}isks {A}nalysis based VM {T}hreat {P}rediction {M}odel (MR-TPM)}
	\label{fig:proposed-model}
\end{figure*}
Consider a {cluster} of $P$ {servers} $\{S_1, S_2, ..., S_P\}\in\mathds{S}$ hosts $Q$ {VMs} $\{V_1, V_2, ..., V_Q\} \in\mathds{V}$ of $M$ {users} $\{U_1, U_2, ..., U_M\}\in\mathds{U}$. Let $S_1$ hosts $x$ VMs such that $\{V^1_1, V^1_2, ..., V^1_x\} \in\mathds{V}^1$; $S_2$ and $S_P$ host $y$ VMs $\{V^2_1, V^2_2, ..., V^2_y\} \in\mathds{V}^2$ and $z$ VMs $\{V^P_1, V^P_2, ..., V^P_z\} \in\mathds{V^P}$, respectively, where $\{\mathds{V}^1, \mathds{V}^2, ..., \mathds{V}^P\} \in\mathds{V}$ and $\{x\cup y\cup z\}\subseteq Q$. A mapping $\omega | \omega$ :$\mathds{U} \times \mathds{V}\mapsto\mathds{S}$ assigns VMs of each user on a specific server such as $\omega_{ji}^k =1$ iff $j^{th}$ VM of $k^{th}$ user is deployed on $i^{th}$ server. The comprehensive description of the essential blocks and intrinsic information flow of MR-TPM is depicted in Fig \ref{fig:proposed-model}.   The proposed cyberthreat prediction model records and analyses multiple { security risk factors} $\{{R}_1, {R}_2, {R}_3, {R}_4, {R}_5\} \in {R}$ associated with a VM {configuration}, such as {VM vulnerability} \{${L}_1, {L}_2, ..., {L}_Q$\}, {server Hypervisor vulnerability} \{${H}_1, {H}_2, ..., {H}_P$\}; and {VM allocation}, including {Side-channel effect} \{${C}_1, {C}_2, ..., {C}_Q$\} and {Network cascading effect} \{${N}_1, {N}_2, ..., {N}_Q$\}; {User behaviour} \{$U^{\ast}_1, U^{\ast}_2, ..., U^{\ast}_M$\}; and previous records of {VM threats} \{$\Xi_1, \Xi_2, ..., \Xi_n$\}. During time-interval \{$t_a, t_b\}\in t$, all the security risk factors and threat information are collected and categorized into four {classes}:
\begin{itemize}
	\item User behaviour analysis \{$U^{\ast}_1, U^{\ast}_2, ..., U^{\ast}_M$\} (Section \ref{userba})
	\item { VM Configuration-dependent factors} for computation of the scores of VM vulnerability \{$L_1, L_2, ..., L_Q$\} and server hypervisor vulnerability scores \{${H}_1, {H}_2, ..., {H}_P$\} (Section \ref{config})
	\item {VM Allocation-dependent factors} for assessment of side-channel effects \{${C}_1, {C}_2, ..., {C}_Q$\} and network cascading effects \{${N}_1, {N}_2, ..., {N}_Q$\} (Section \ref{allocation})
	\item  {Records of live threats or malicious actions on VMs} for updation of VM threat database 
\end{itemize}


	

\begin{definition}
	\textbf{VM cyberthreat prediction}: The mechanism intended for computation and analysis of various security escapes and unpatched discrepancies associated with a VM along with proactive threat estimation,  is designated as VM cyberthreat prediction.
\end{definition}
 MR-TPM proactively estimates the {workload information} \{$W^p_1, W^p_2, ..., W^p_Q \in \mathds{W^p}$\} by utilizing a {neural network based workload predictor} ($Pr$), which is periodically trained with the latest and historic {resource utilization} \{$RU_1, RU_2, ..., RU_Q$\} by VMs \{$V_1, V_2, ..., V_Q$\} to determine {active VMs} \{$\hat{V}_1, \hat{V}_2, ..., \hat{V}_{Q^{\ast}}$, $Q^{\ast}\subseteq Q $\} having {predicted workload} ($W^p$) $> 0$. The prior knowledge of active VMs is utilized for analysis of the placement of VMs during the next time interval \{$t_{a+1}, t_{b+1}\} \in t$. The consecutive processes of {feature selection} ($\mathds{FS}$) and {threat prediction} ($\mathds{TP}$) is performed for active VMs based on the predicted workload information \{$W^p_1,W^p_2, ..., W^p_Q$\} for VMs \{$V_1, V_2, ..., V_Q$\} of users \{$U_1, U_2, ..., U_M$\} during time-interval \{$t_{a+1}, t_{b+1}$\}. The historical database of VM threats is utilized for feature selection, followed by training of online VM threat predictor $\mathds{TP}$, which is periodically re-trained with the latest data samples for online VM threat prediction \{$\Xi^p_1, \Xi^p_2, ..., \Xi^p_{Q^{\ast}}$\}.
 Among all the collected and analysed features \{$ L, {H}, {C}, {N}, \mathds{V}, \mathds{S}, \mathds{U}, \mathds{U}^{\ast}, W^p, \omega $ etc.\} $\subseteq \mathds{F}$, only useful features are filtered (i.e., $\mathds{F}^{\ast}$) by applying {Recursive Feature Elimination} (RFE) to train threat predictor $\mathds{TP}$ to estimate VM threats \{$\Xi^p_1, \Xi^p_2, ..., \Xi^p_{Q^{\ast}}$\} with improved accuracy and reduced computation time. The data samples containing selected features \{$\mathds{F}_1^{\ast}, \mathds{F}_2^{\ast}, ..., \mathds{F}_{s}^{\ast}$\} $ \in\mathds{F}^{\ast}$ are split into {training samples} \{$\bar{\mathds{F}}_1^{\ast\ast}, \bar{\mathds{F}}_2^{\ast\ast}, ..., \bar{\mathds{F}}_{s^{\ast}}^{\ast\ast}$\} $ \in \bar{\mathds{F}}^{\ast\ast}$  and {testing samples} \{$\mathds{F}_1^{\ast\ast}, \mathds{F}_2^{\ast\ast}, ..., \mathds{F}_{t^{\ast\ast}}^{\ast\ast}$\} $ \in\mathds{F}^{\ast\ast}$ subject to the constraints: (\textit{i}) $ \mathds{F}^{\ast}=\bar{\mathds{F}}^{\ast\ast} \cup {\mathds{F}}^{\ast\ast}$ (\textit{ii}) $\bar{\mathds{F}}^{\ast\ast} \cap {\mathds{F}}^{\ast\ast}= \emptyset$ (\textit{iii}) $\{s^{\ast}, s^{\ast\ast}\} \subseteq s$ where $s$ is total number of data samples. A mapping \{$\Omega|\Omega:\bar{\mathds{F}}^{\ast\ast} \times \mathds{TP} \Rightarrow \mathds{TP}^{\ast} $\} trains threat predictor $\mathds{TP}$ with $\bar{\mathds{F}}^{\ast\ast} $ to generate {Trained Predictor} ($\mathds{TP}^{\ast}$) during training phase while \{$\Omega^\ast|\Omega^\ast:{\mathds{F}}^{\ast\ast} \times \mathds{TP}^{\ast} \Rightarrow \mathds{TP}^{\ast\ast} $\} evaluates $\mathds{TP}^{\ast}$ with unseen test data ${\mathds{F}}^{\ast\ast} $ to generate {Tested Predictor} ($\mathds{TP}^{\ast\ast}$) for online VM threat prediction.
\par The proposed VM threat predictor utilizes an {Extreme-Gradient Boosting} (XGB) based machine learning algorithm to learn and develop the precise correlations among extracted patterns for accurate prediction of cyberthreats: \{$\Xi^p_1$, $\Xi^p_2$, ..., $\Xi^p_{Q^{\ast}}$\}. Let a threat predictor ($\mathds{TP}$) is composed of $l$ base learners (i.e., decision trees) $ \mathds{BL}^\ast=\{\mathds{BL}^\ast_1, \mathds{BL}^\ast_2, ..., \mathds{BL}^\ast_l\}$ and predicts output $\mathds{O}^\ast=\{\mathds{O}^\ast_1, \mathds{O}^\ast_2, ..., \mathds{O}^\ast_l\}$ using Eq. (\ref{eq:xgb1}), where $\mathds{F}_i \subseteq \mathds{F}^{\ast}$ such that $\mathds{F}$ represents the input vector of size $s^\ast$.  During each iteration, decision trees are trained incrementally to reduce prediction errors and the amount of error reduction  is computed as \textit{gain} or  \textit{loss term} using Eq. (\ref{eq:xgb2}). The expressions $L(\mathds{O}^\ast, {\mathds{O}^{\ast\ast}_{t-1}} + {\mathds{BL}^\ast_t}(\mathds{F}_i))$ and $\Psi(\mathds{BL}^\ast_t)$ are {loss term} and a {regularization term}, respectively. Taylor expansion is applied to compute the exact loss for different possible base learners, which updates Eq. (\ref{eq:xgb2}) to Eq. (\ref{eq:xgb3}); where $g_i=\partial_{\mathds{O}^{\ast\ast}_{t-1}}{L(\mathds{O}^\ast, {\mathds{O}^{\ast\ast}_{t-1}})}$, and $h_i=\partial^2_{\mathds{O}^{\ast\ast}_{t-1}}{L(\mathds{O}^\ast, {\mathds{O}^{\ast\ast}_{t-1}})}$ are first and second order derivatives of loss function in the gradient, respectively. The term $	\Psi(\mathds{BL}^\ast_t)$ is computed using Eq. (\ref{eq:xgb5}),
 where $\gamma$ and $\lambda$ are $L_1$ and $L_2$ regularisation coefficients, respectively, $w$ is internal split tree weight and  $K$ is the number of leaves in the tree.
\begin{gather}
\label{eq:xgb1}
	\mathds{O}^\ast=\sum_{z=1}^{l}\mathds{BL}_z^\ast(\mathds{F}_i) \quad \forall i \in \{1, 2, ..., s^\ast\}
\\
	\label{eq:xgb2}
	{L}_t = \sum_{i=1}^{ s^\ast} {L(\mathds{O}^\ast, {\mathds{O}^{\ast\ast}_{t-1}} + {\mathds{BL}^\ast_t}(\mathds{F}_i)) + \Psi(\mathds{BL}^\ast_t)}
\\
  \label{eq:xgb3}
  {L}_t = \sum_{i=1}^{ s^\ast} { [g_i\mathds{BL}^\ast_t(\mathds{F}_i) + \frac{1}{2}h_i {\mathds{BL}^\ast_t}(\mathds{F}_i)] + \Psi(\mathds{BL}^\ast_t)}  
 \\
  \label{eq:xgb5}
 	\Psi(\mathds{BL}^\ast_t)=\gamma K + \dfrac{1}{2} \lambda ||w||^2
\end{gather}   
During each time-interval $\{t_a, t_b\}\in t, a<b$, live selected features  $\hat{\mathds{F}}^{\ast\ast} $ are given as input to the above discussed threat predictor $\mathds{TP}^{\ast\ast} $ to estimate the status of threat $\Xi$ for  VMs \{$\hat{V}_1, \hat{V}_2, ..., \hat{V}_Q^{\ast}$\} in the next time-interval  $\{t_{a+1}, t_{b+1}\}\in t, a<b$. Accordingly, the process of {VM-threat handling} is performed for the VMs with predicted threat-status ($\hat{V}_i^{\Xi} >0: i \in [1, 2, ... Q^{\ast\ast}], Q^{\ast\ast} \subseteq Q^{\ast}\subseteq Q $) by shifting them to a server where the possibility of threat is least ($\hat{V}_i^{\Xi} = 0$). A detailed description of VM security risk factors is provided in the subsequent sections.

\section{User behaviour analysis} \label{userba}
Users \{$U_1, U_2, ..., U_M$\} submit {job requests} \{$Rq_1, Rq_2, ..., Rq_M$\} during time-interval $\{t_a, t_b\}$ at the cloud platform as depicted in Fig. \ref{fig:userbehaviour}, where the users are classified into {Trusted}, {Non-trusted} and {Unknown} users. The $k^{th}$ user $U_k$ behaviour is defined in accordance with the actions of its VMs as follows: 	\textit{Trusted}: The user behaviour is trusted when the VMs of known user $U_k$ (having historical  records of VM resource usage), execute assigned load efficiently without interrupting and interfering with other co-located VMs via an unauthorised access, irrespective of the presence of any vulnerabilities of software or hardware. \textit{Non-trusted}: A user $U_k$  is non-trusted in case of the users VM attempt any kind of cybercrime or malicious activity such as unauthorized data access, data hijacking, data phishing, etc. by leveraging the susceptibilities of cloud virtualization technology. \textit{Unknown}: The new user for which there are no records of any  previous VM usage, is considered as unknown user.
{User behaviour analysis deals with the process of critical monitoring, recording, and examining the traces of their previous VM usage and the inter-relationships among co-resident VMs of different users  periodically to interpret or investigate the occurrence of cyberthreats in the presence of intended vulnerabilities of cloud environments.} The class of user and the selected VM placement policy are passed to the load balancer, which makes VM management decisions. Accordingly, the VMs are deployed on different servers to compute the users' data \{$Rq_1, Rq_2, ..., Rq_M$\}. Concurrently, the VM usage traces or type of data access information is collected and passed to `VM usage database' for examination of the user behaviour. 
	\begin{definition}
		\textbf{VM usage database} ($\mathds{DB}$): The historical repository of data values concerning VM usage related attributes such as its ephemeral user ID, CPU, memory, and bandwidth usage, inter-communication links with other VMs, types of authorized access, etc., constitute  VM usage database which is utilized for multiple risks computation, training of resource usage predictor, and VM threat predictor.
	\end{definition}  
\begin{figure}[!htbp]
	\centering
	\includegraphics[width=0.95\linewidth]{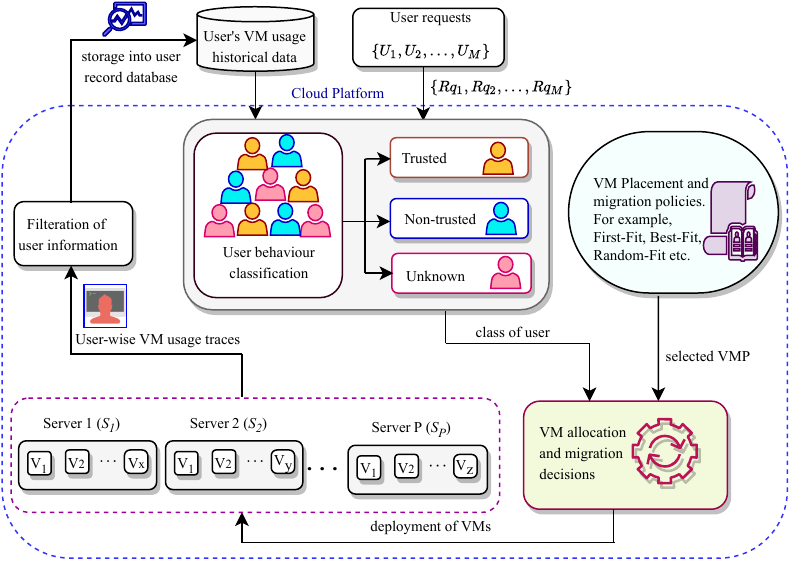}
	\caption{User classification}
	\label{fig:userbehaviour}
\end{figure}  
The new VM of $k^{th}$ user ($U_k$) is allocated according to the analysis of $U_k$ behaviour by applying Eq. (\ref{uba}), where $\Theta_k$ represents malicious actions for e.g., {unauthorized access} executed by $U_k$.    
\begin{equation}
\label{uba}
U_k= \begin{cases}
Trusted (0), & {If( \Theta_k=0)} \\
Non-trusted (1), & {If(\Theta_k>0)} \\
Unknown (-1), & {\text{otherwise}}  
\end{cases}	
\end{equation}

\begin{theorem}
	The behaviour of user $U_k^\ast$ having VM $V_i^{k^\ast}$ hosted on server $S_j$ is bounded by $\Theta$ such that for a given time-period \{$t_a$, $t_b$\} and VM usage database ($\mathds{DB} \neq \phi$), if $\Theta_k^\ast= 1$, $U_k^\ast$ is non-trusted; otherwise, it is trusted.
\end{theorem}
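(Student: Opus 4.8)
The plan is to derive the statement directly from the user-classification rule of Eq.~(\ref{uba}) together with the non-emptiness hypothesis on the VM usage database, so that the proof is essentially a case analysis on the admissible values of $\Theta_k^\ast$. First I would observe that since $\mathds{DB} \neq \phi$, the user $U_k^\ast$ possesses historical records of the VM resource usage of $V_i^{k^\ast}$ on $S_j$ over the window $\{t_a, t_b\}$; hence $U_k^\ast$ is a \emph{known} user and cannot fall into the \textit{Unknown} ($-1$) branch of Eq.~(\ref{uba}). Thus the three-way classification collapses to the dichotomy \textit{Trusted}/\textit{Non-trusted}.

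Next I would make precise the quantity $\Theta_k^\ast$: as introduced after Eq.~(\ref{uba}), it records the malicious actions — e.g., unauthorized-access attempts — executed by the VMs of $U_k^\ast$ during $\{t_a, t_b\}$, as logged in $\mathds{DB}$. Consequently $\Theta_k^\ast$ is a non-negative count, so exactly one of $\Theta_k^\ast = 0$ or $\Theta_k^\ast \geq 1$ holds; after the thresholding implicit in Eq.~(\ref{uba}) (any positive count of malicious activity maps to the label value $1$) this reads $\Theta_k^\ast \in \{0,1\}$, which is the precise sense in which the behaviour of $U_k^\ast$ is ``bounded by $\Theta$.''

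Then I would finish with a two-line case split. If $\Theta_k^\ast = 1$, then $\Theta_k^\ast > 0$, so the second case of Eq.~(\ref{uba}) applies and $U_k^\ast$ is labelled \textit{Non-trusted}. Otherwise $\Theta_k^\ast = 0$; combined with the already-established fact that $U_k^\ast$ is not \textit{Unknown}, the first case of Eq.~(\ref{uba}) forces the label \textit{Trusted}. These two alternatives are exhaustive, which proves the claim.

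The main obstacle is not computational but a matter of making the modelling hypotheses explicit: one must argue that the condition $\mathds{DB} \neq \phi$ genuinely disables the ``otherwise'' branch of Eq.~(\ref{uba}), i.e., that possessing a non-empty usage history is exactly what distinguishes a known user from an unknown one, so the residual case $\Theta_k^\ast = 0$ yields \textit{Trusted} rather than \textit{Unknown}. Once that identification is stated, the theorem is an immediate consequence of the piecewise definition, and no further estimates are needed.
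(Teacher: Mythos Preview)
Your argument is logically sound as a reading of the theorem statement: you eliminate the \textit{Unknown} branch via the hypothesis $\mathds{DB}\neq\phi$ and then reduce everything to a two-case application of Eq.~(\ref{uba}). That is a valid derivation of the claimed dichotomy.

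However, the paper's proof proceeds quite differently. Rather than treating $\Theta_k^\ast$ as a pre-existing count and invoking Eq.~(\ref{uba}), the paper \emph{constructs} $\Theta$ explicitly: it introduces the co-residency product $\omega_{ij}^k\times\omega_{i^\ast j}^{k^\ast}$ and the Boolean inter-VM relationship $\uplus^v_{ij,k\rightarrow i^\ast j,k^\ast}$ (Eqs.~(\ref{threat1})--(\ref{linkthreat})), where $\uplus^v=1$ precisely when the access link falls outside the authorised set $\mathds{LA}^{V_{i,k^\ast}}$. The conclusion $\Theta_k^\ast=1$ for a non-trusted user is then obtained by substituting $\uplus^v=1$ back into Eq.~(\ref{threat1}). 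In other words, the paper's ``proof'' is really a definitional unfolding that pins down what $\Theta$ is in terms of the placement map $\omega$ and the link-authorisation predicate.

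What each approach buys: your route is shorter and more faithful to the theorem as literally stated, since the statement is essentially a restatement of two branches of Eq.~(\ref{uba}) once \textit{Unknown} is ruled out. The paper's route, by contrast, supplies the concrete formulas (\ref{threat1})--(\ref{linkthreat}) that are immediately reused in Corollary~1 and in the aggregate expression Eq.~(\ref{threat3}); without that machinery the downstream server-side analysis ($\uplus^S$, $\mathds{H}^\ddagger_j$) would have nothing to plug into. So your proof closes the theorem cleanly but does not deliver the operational content the paper actually needs from this block.
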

\begin{proof}
	Let $\Theta_{ij,k \Rightarrow {i^\ast}j,{k^\ast}}$ represents a data access by a user $U_k^\ast$ owning VM $V_{i^\ast}^{k^\ast}$ to $k^{th}$ user $U_k$ VM $V_{i}^k$ during time \{$t_a$, $t_b$\}, is formulated in Eq. (\ref{threat1}): 
	\begin{equation}
	\label{threat1}
	\int_{t_a}^{t_b} \Theta_{ij,k \Rightarrow {i^\ast}j,{k^\ast} } dt= {\int_{t_a}^{t_b} (\omega_{ij}^k \times \omega_{{i^\ast}j}^{k^\ast}) \times \uplus^v_{ij,k \Rightarrow {i^\ast}j,{k^\ast}} } dt
	\end{equation}
	where, $\uplus^v_{ij,k \Rightarrow {i^\ast}j,{k^\ast}}$ represents inter-VM relationship between $V_{i}^k$ and $V_{i^\ast}^{k^\ast}$, $\forall \{i, {i^\ast}\} \in Q, j \in P$. It is a Boolean value, $1$ for unauthorised data access (i.e., non-trusty relation) and $0$ for trusty relation. Assume $\mathds{LA}$ (stated in Eq. (\ref{links}))  specifies set of authorized inter-VM links for $i^{th}$ VM of ${k^\ast}^{th}$ user. The  inter-VM relationship ($\uplus^v_{ij,k\rightarrow i^{\ast}j,k^{\ast}}$) between $V_{i}^k$ and $V_{i^{\ast}}^{k^\ast}$ placed on $j^{th}$ server is evaluated in Eq. (\ref{linkthreat}) which corresponds to the inter-VM links ${\mathds{LA}_{ij,k \Rightarrow {i^\ast}j,{k^\ast}}}$ between them. \\
	\begin{gather}
	\mathds{LA}^{V_{i,k^\ast}} \in \{\mathds{LA}_1^{V_{i,k^\ast}}, \mathds{LA}_2^{V_{i,k^\ast}}, ..., \mathds{LA}_n^{V_{i,k^\ast}}\} \label{links}  \\
		\uplus^v_{{ij,k\rightarrow i^{\ast}j,k^{\ast}}} = 	\begin{cases}
	$1$, & {If \quad {\mathds{LA}_{ij,k \rightarrow  i^{\ast}j,k^{\ast}}}} \nsubseteq \mathds{LA}^{v_{i,k}}  \\
	$ 0$, & { {\textit{otherwise}} } 
	\end{cases} \label{linkthreat}
	\end{gather}
    
 Hence, when the user $U_k^\ast$ has attempted an unauthorised access, the inter-VM relationship parameter $\uplus^v_{{ij,k\rightarrow i^{\ast}j,k^{\ast}}}$ is equal to $1$ and applying Eq. (\ref{linkthreat}) in Eq. (\ref{threat1}), $\Theta_k^\ast= 1$ for $U_k^\ast$ is non-trusted, and trusted, otherwise. 	
\end{proof}

 \begin{corollary}
 	The user $U_k^\ast$ behaviour is also reflected by the relationship $\uplus^S_{{ij,k} \rightarrow S_{j}}$ between user $U_k^\ast$ and  server $S_{j}$ which is `non-trusty' for malicious records ($\mathds{H^\ddagger}_j$) greater than $0$, otherwise, it is trusty.
 \end{corollary}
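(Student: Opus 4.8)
The plan is to lift Theorem~1 from the inter-VM level to the user--server level. The observation driving the proof is that the server's malicious-action record $\mathds{H^\ddagger}_j$ is nothing but the running tally, over the monitoring window $\{t_a,t_b\}$, of exactly the unauthorised-access events that $\Theta_k^\ast$ counts in Theorem~1; hence a statement about $\mathds{H^\ddagger}_j$ being positive is equivalent to a statement about $\Theta_k^\ast=1$, and the user--server relationship $\uplus^S_{ij,k\rightarrow S_j}$ simply inherits the trusty/non-trusty dichotomy already established for $\uplus^v$ in the theorem.

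Concretely, I would first \emph{define} $\uplus^S_{ij,k\rightarrow S_j}$ for a VM $V_i^{k^\ast}$ of $U_k^\ast$ placed on $S_j$ as the disjunction of the inter-VM relationships $\uplus^v_{ij,k\rightarrow i^\ast j,k^\ast}$ ranging over every VM $V_{i^\ast}^{k^\ast}$ that is co-resident on $S_j$ during $\{t_a,t_b\}$, i.e.
\[
\uplus^S_{ij,k\rightarrow S_j}=1 \ \Longleftrightarrow\ \sum_{i^\ast\, :\, \omega^{k^\ast}_{i^\ast j}=1}\ \int_{t_a}^{t_b}\uplus^v_{ij,k\rightarrow i^\ast j,k^\ast}\,dt \;>\;0 .
\]
By Eq.~(\ref{threat1}), each summand coincides, under the co-residency indicator $\omega_{ij}^k\,\omega_{i^\ast j}^{k^\ast}=1$, with $\int_{t_a}^{t_b}\Theta_{ij,k\Rightarrow i^\ast j,k^\ast}\,dt$, so the right-hand side is precisely the number of unauthorised accesses attributed to $U_k^\ast$ on $S_j$ over the interval, which is what the server accumulates as $\mathds{H^\ddagger}_j$. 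Next I would invoke Theorem~1: an unauthorised access is registered iff some $\uplus^v_{ij,k\rightarrow i^\ast j,k^\ast}=1$, equivalently iff $\Theta_k^\ast=1$. Chaining the two equivalences gives $\mathds{H^\ddagger}_j>0 \iff \Theta_k^\ast=1 \iff U_k^\ast$ non-trusty, and $\mathds{H^\ddagger}_j=0 \iff \Theta_k^\ast=0 \iff U_k^\ast$ trusty, which is the claim. The one-to-many and many-to-many attack modes need no separate treatment, since the sum already ranges over all co-resident targets, so a single malicious VM touching several VMs on $S_j$ still forces $\mathds{H^\ddagger}_j>0$.

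The part I expect to be delicate is not the algebra but pinning down the bookkeeping of $\mathds{H^\ddagger}_j$: one has to argue that the server's record is incremented exactly when, and only when, the detector of Eq.~(\ref{linkthreat}) fires during $\{t_a,t_b\}$ --- i.e.\ that monitoring is complete (this is where the standing assumption $\mathds{DB}\neq\phi$ enters), that authorised links in $\mathds{LA}^{V_{i,k^\ast}}$ are never mistakenly logged, and that the record is scoped to the same time window used in Theorem~1 rather than to an accumulated history. Once this correspondence between the event in Eq.~(\ref{linkthreat}) and the increment of $\mathds{H^\ddagger}_j$ is made explicit, the corollary drops out of Theorem~1 with no further work.
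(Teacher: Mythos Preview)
Your strategy --- aggregate Theorem~1 over co-resident VMs so that $\mathds{H^\ddagger}_j$ becomes the running tally of the $\Theta$-events, then read off the trusty/non-trusty dichotomy --- is exactly what the paper does. The paper's Eq.~(\ref{eq:serverthreat}) sets $\mathds{H^\ddagger}_j=\sum\omega_{ij}^k\,\omega_{i^\ast j}^{k^\ast}\,\Theta_{ij,k\Rightarrow i^\ast j,k^\ast}$, which is your sum, and then invokes Theorem~1 to force a summand equal to $1$ whenever $U_{k^\ast}$ is non-trusty, giving $\mathds{H^\ddagger}_j>0$.

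The one point where you diverge is in how $\uplus^S_{ij,k\rightarrow S_j}$ is introduced. You manufacture it as a pure disjunction of the $\uplus^v$ relations, which makes $\uplus^S=1\Leftrightarrow\mathds{H^\ddagger}_j>0$ essentially definitional. The paper instead treats $\uplus^S$ as a separate server-level access indicator --- it first posits a hypervisor-mediated access $\Theta_{ij,k\Rightarrow S_j}$ in Eq.~(\ref{threat2}) and then \emph{defines} $\uplus^S$ via Eq.~(\ref{thresholdthreat}) as $1$ only when $\mathds{H^\ddagger}_j>0$ \emph{and} the hypervisor vulnerability exceeds its threshold, $H_j>H_{Thr}$. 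That extra conjunct carries the ``unauthorised data access via malicious hypervisor'' semantics the paper attaches to the user--server relation, and the paper's final step explicitly invokes it. So your argument is sound for the object you wrote down, but it is not quite the paper's $\uplus^S$; to align with the paper, carry the side condition $H_j>H_{Thr}$ through rather than absorbing everything into the disjunction.
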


\begin{proof}
	Let an unauthorized data access $\Theta_{{ij,k^\ast} \Rightarrow S_{j} }$ from $i^{th}$ VM of ${k^\ast}^{th}$ user to server $S_{j}$ during time $\{t_a, t_b\}$ is formulated in Eq. (\ref{threat2}). The term $ \uplus^S_{ij,k^\ast}= \{0, 1\}$ signifies a relationship between $S_{j}$ and $U_k^\ast$, such that it is equals to a Boolean value, $1$ for an unauthorized data access via malicious hypervisor, and $0$ otherwise.
	\begin{gather} \label{threat2}
	\int_{t_a}^{t_b} \Theta_{{ij, k} \Rightarrow S_{j} }dt = {\int_{t_a}^{t_b} \omega_{ij,k} \times \uplus^S_{{ij,k} \rightarrow S_{j}}}dt \quad \forall \{i\} \in Q, j \in P 
	\end{gather}
	Suppose the relation ($\uplus^S_{{ij,k} \rightarrow S_{j}}$) between user $U_k$ and  server $S_{j}$ is analysed using Eq. (\ref{thresholdthreat}), where $\mathds{H^\ddagger}_j$ represents {malicious actions records} computed using Eq. (\ref{eq:serverthreat}). 
	\begin{gather}
	\uplus^S_{{ij,k} \rightarrow S_{j}} = 	\begin{cases}
	$1$, & {If(\mathds{H^\ddagger}_j >0 \land {H_j} > {H}_{Thr}) } \\
	$ 0$, & { {\textit{otherwise}} } 
	\end{cases} \label{thresholdthreat}\\
	\label{eq:serverthreat}
	\mathds{H^\ddagger}_j = \sum{\omega_{ij}^k \times \omega_{{i^\ast}j}^{k^\ast} \times \Theta_{ij,k \Rightarrow {i^\ast}j,{k^\ast} } } 
	\end{gather}
	If user $U^{k^\ast}$ is non-trusty, then $\Theta_{ij,k \Rightarrow {i^\ast}j,{k^\ast}} = 1$  (as proved in Theorem 1). Accordingly, the value of the term $\mathds{H^\ddagger}_j$ is also $1$. Putting $\mathds{H^\ddagger}_j$= $1$ in Eq. (\ref{thresholdthreat}) when ${H_j} > {H}_{Thr}$, the value of $\uplus^S_{{ij,k} \rightarrow S_{j}}$ becomes $1$. Hence, $\mathds{H^\ddagger}_j > 0$ for a non-trusty behaviour of user $U^{k^\ast}$.  
		
\end{proof}
Further, the total threat information or unauthorized data access $\Theta_k$  for the duration \{$t_a$, $t_b$\} by $U^k$ is compiled by applying Eq. (\ref{threat3}):
\begin{equation}
\label{threat3}
\int_{t_a}^{t_b} \Theta_{k }dt = {\int_{t_a}^{t_b} (\Theta_{ij,k \Rightarrow {i^\ast}j,{k^\ast} } + \Theta_{{ij, k} \Rightarrow S_{j} })}dt
\end{equation}

The {Random Forest Classifier} (RFC) classifies users $U_1, U_2, ..., U_M$ on the basis of their future behaviour  by utilizing the learning capability of different base learners or decision trees and knowledge driven via extracted correlated patterns from their historical information, where Eq. (\ref{uba}) is evaluated periodically for duration \{$t_a, t_b$\} $\in t$. RFC is composed of $n^\ast$ {base learner estimators} that produce ${n^\ast}$ outcomes and apply majority voting to predict absolute behaviour of user $U_k$.



\section{Configuration-dependent factors} \label{config}
 The  vulnerabilities of virtualisation technology and VM security loopholes which are governed by the susceptibilities related to the creation and installation of VMs, including sharing of a common physical machine, hypervisor or guest OS installation, are confined to configuration-dependent risks. MR-TPM considers two {configuration dependent} security risk factors (${R_2}, {R_3}$), including {VM vulnerability} ($L$) \cite{chiang2014swiper} and {Hypervisor vulnerability} (${H}$) \cite{holm2012empirical}. A {malicious user} ($U^{Mal}: U^{Mal} \subseteq \mathds{U}$) launches multiple applications ($A_p$, $A_q$, ..., $A_t$) to compromise the {target benign VM} ($V^{Ben}: V^{Ben} \subseteq \mathds{V}$) by achieving co-residency and exploiting VM and hypervisor vulnerabilities, as shown in Fig. \ref{fig:VM_Hyper_Vul}. The application $A_p$ of $U^{Mal}$ exploits the hypervisor vulnerability of server $S_1$  (i.e., $H_1 > H_{Thr}$) and compromises multiple VMs. At server $S_p$, the applications $A_s$ and $A_t$ of $U^{Mal}$ utilize the vulnerability of VM $V_2$ (i.e., $L_2 > L_{Thr}$) to launch the attack and hamper computational data on it. The parameters $H_{Thr}$ and $L_{Thr}$ specify threshold values of hypervisor vulnerability  and VM vulnerability, respectively. At server $S_2$, both kinds of vulnerabilities are absent, i.e., the threshold values of VM vulnerability as well as hypervisor vulnerability are lesser than their respective threshold values, and all the VMs deployed on it are secured ($V^{\Xi}=0$). 
 \begin{figure*}[!htbp]
 	\centering
 	\includegraphics[width=0.65\linewidth]{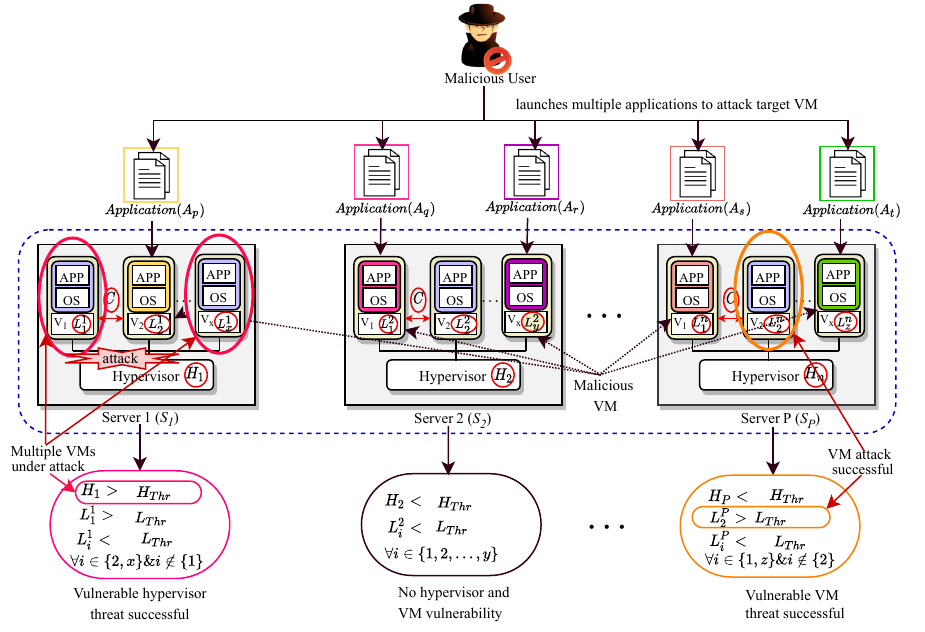}
 	\caption{VM and hypervisor vulnerability based threats}
 	\label{fig:VM_Hyper_Vul}
 \end{figure*}
\par The {vulnerable VMs} are deprived of standard security features with respect to the operating system, applications like e-mail, web-browsing, and network protocols, and are prone to loose administrative control. Besides this, {vulnerable hypervisors of servers} leads to {hyperjacking} where $U^{Mal}$ can easily gain unauthorized access of hypervisor to compromise all the hosted VMs and the applications running on them. It is typically launched against {Type 2-Hypervisors} running over a host operating system. 
A mapping \{$\varpi|\varpi:A_m \times U^{Mal}\times V_{i} \Rightarrow V^{Mal}_i$\} defines {malicious VMs} such that an $i^{th}$ VM ($V_{i}$) becomes malicious, if it hosts $m^{th}$ application ($A_m$) of $U^{Mal}$. The probability of threat ($\mathds{P}(\Xi_i)$) for $i^{th}$ VM over time-interval \{$t_a, t_b$\} can be defined using Eq. (\ref{eq_VM_Hypr_threat}),
\begin{equation}
\label{eq_VM_Hypr_threat}
	\mathds{P}(\Xi_i)= \begin{cases}
1, & {If( L_i > L_{Thr} \quad \&\& \quad \omega_{ji}^k \cap\omega_{ji^{\ast}}^{k^{\ast}} = S_j)} \\
1, & {If({H}_j > {H}_{Thr} \quad \&\& \quad \omega_{ji}^k \cap \omega_{ji^{\ast}}^{k^{\ast}} = S_j)} \\
0, & {\text{otherwise}}  
\end{cases}	
\end{equation}
where $t_a < t_b$ and $\omega_{ji}^k \cap\omega_{ji^{\ast}}^{k^{\ast}} = S_j$ signifies co-location of $i^{th}$ VM ($V_{i}$) of $k^{th}$ {benign user} ($U^{Ben}|U^{Ben} \subseteq \mathds{U}$) and ${i^{\ast}}^{th}$ malicious VM ($V_{i^{\ast}}\subseteq V^{Mal}$) of ${k^{\ast}}^{th}$ malicious user $U^{Mal}$ at  $j^{th}$ server ($S_{j}$). 

\subsection{VM vulnerability}
The VMs vulnerability score list is generated using vulnerability scanner tools, such as Common Vulnerability Scoring System (CVSS), Nessus and Qualys \cite{holm2012empirical}. The CVSS  measures the severity of vulnerabilities  of a hardware or software and produces a score in the range [0, 10]. It quantifies the vulnerability  risk score ($L$) of $i^{th}$ VM in the range [0, 1] by applying Eq. (\ref{eq:vmvul}). 
\begin{equation} \label{eq:vmvul}
L_i = \frac{{V}^{Score}_i}{10} \quad \forall i \in [1,Q],  {V}^{Score} \in [1,10]
\end{equation}

\subsection{Hypervisor vulnerability}
The security risk of a hypervisor (${H}$) depends on its own vulnerability (${S}^{Hyp\_scor}$) as computed in Eq. (\ref{hypvul 1}) by applying CVSS score system and the vulnerability of the VMs hosted on it. The overall vulnerability score of hypervisor ${H}_j$ is given by Eq. (\ref{hypvul 2}), where  $max({L}_i \times \omega_{ij})$ represents maximum VM vulnerability score ($L$) among all VMs hosted on server $S_j$,  $\forall i \in [1,Q], j \in [1,P]$, $\omega_{ij}=1$ if ${S}_j$ hosts ${V}_i$. 

\begin{gather} 
{S}_{j}^{Hyp\_scor} = \frac{{S}^{Score}_j}{10} \quad \forall {S}^{Score} \in [1, 10]  \label{hypvul 1}\\
\int_{t_a}^{t_b}{H}_jdt = \int_{t_a}^{t_b}{S}^{Hyp\_scor}_j(1+max(L_i \times \omega_{ij}) ) dt \label{hypvul 2} 
\end{gather}

\section{ Allocation-dependent factors} \label{allocation}
{The cybersecurity risk factors pertaining to the distribution of physical resources and assignment of VMs on physical servers subject to resource availability constraints, characterize allocation-dependent risk factors.} The VM security risks due to the {Side-channel effect} and {Network cascading effect} depend upon the placement of VMs of different users on available servers (i.e., $\mathds{U} \times \mathds{V}\Rightarrow\mathds{S}$).  Two VMs $V_i$ and $V_j$ are inter-dependent $iff (V_i, V_j) \in \mathds{LA} $, where $\mathds{LA}$ implies {Legal Access} subject to the constraints: 
\begin{itemize}
	\item  $V_i(\mathds{LA}) V_i \quad  \forall_{V_i}\in\mathds{LA} $,
	\item  $V_i(\mathds{LA}) V_j =V_j(\mathds{LA}) V_j \quad \forall_{ V_i, V_j} \in \mathds{LA}$,
	\item  $\{V_i(\mathds{LA}) V_j \cup V_j(\mathds{LA}) V_k\} \Rightarrow V_i(\mathds{LA}) V_k$,
	\item $ \forall_{ V_i, V_j, V_k} \in \mathds{LA}$ 
\end{itemize}
As depicted in Fig. \ref{fig:coresidencynetwork-attack}, a malicious user $U^{Mal}$ executes an application at $V_1$ hosted on the server ($S_1$) having an effective VM vulnerability, i.e., $L_1^1 > L_{Thr}$, achieves co-residency with one of the inter-dependent VMs ($\{V_1, V_2, ..., V_Z\} \in \mathds{LA}$), where $Z$ is the number of inter-dependent VMs. The malicious VM ($V^{Mal}_1$) successfully launches {side-channel threat} on vulnerable VM ($V_2$) and the threat propagates to multiple VMs crossing physical boundaries of network devices using {network cascading effect} via inter-communication links among VMs: $\{V_1, V_2, ..., V_Z \} \in \mathds{LA}$. The probability of threat ($\mathds{P}(\Xi_i)$) for $i^{th}$ VM over time-interval \{$t_a, t_b$\} is defined using Eq. (\ref{eq_Network_threat}),
where $\mathds{C}^{\ast}_{i{i^\ast}j}=\omega_{ij}\times \omega_{{i^\ast}j}=\{0, 1\}$ is a Boolean variable which specifies co-location between $i^{th}$ VM ($V_i$) and ${i^\ast}$ malicious VM ($V_i^\ast$) at server ($S_j$).  
\begin{equation}
\label{eq_Network_threat}
\mathds{P}(\Xi_i)= \begin{cases}
1, & {If( (L_i > L_{Thr} \lor {H}_j > {H}_{Thr}) \land \mathds{C}^{\ast}_{i{i^\ast}j}}) \\
1, & {If(\Pi_{k=1}^Z(\mathds{C}^{\ast}_{k{i^\ast}{j^\ast}}\times \mathds{C}^{\ast}_{i{k}j} \times L_k)>L_{Thr})} \\
0, & {\text{otherwise}}  
\end{cases}	
\end{equation}
\begin{figure}[!htbp]
	\centering
	\includegraphics[width=0.63\linewidth]{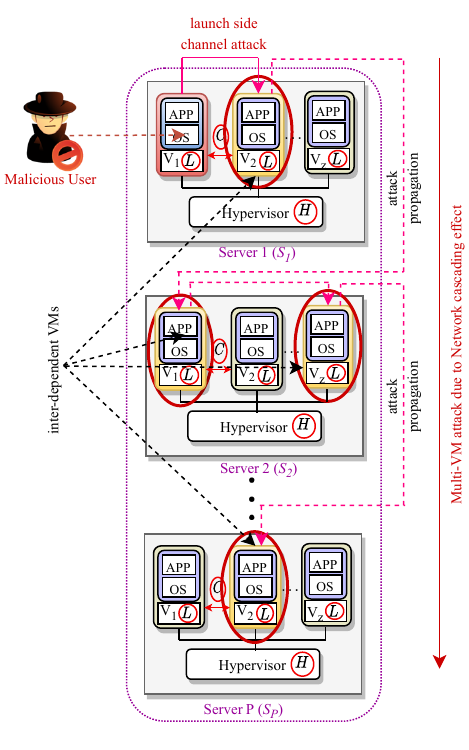}
	\caption{Side channel and Network cascading threats}
	\label{fig:coresidencynetwork-attack}
\end{figure}

\subsection{Side-channel effect}
Let a malicious VM $V_j^{Mal}$ and benign VM $V_i^{Ben}$ are hosted on server ${S}_k$. If $V_j^{Mal}$ compromises any VM on server ${S}_k$, then it can compromise other co-resident VMs eventually. Hence, the survival of $V_i^{Ben}$ depends on its own vulnerability score ($L_i$) and its co-resident VMs vulnerability score. The side-channel risk score (${C}$) of $V_i^{Ben}$ during time-interval $\{t_a, t_b\}$ is calculated as stated in Eq. (\ref{corres}), where  $\omega_{jk} \times \omega_{ik}$ represents co-location of $i^{th}$ and $j^{th}$ VM on $k^{th}$ server, $\forall i, j \in [1, Q]$, $k \in [1, P]$.
\begin{equation}
\int_{t_a}^{t_b}{C}_idt = \int_{t_a}^{t_b}1-\Pi_{j=1}^Q(1-L_j \times \omega_{jk} \times \omega_{ik}) dt \label{corres}
\end{equation}	

\subsection{Network cascading effect } The impact of cascading network connections among VMs on cloud  security establishes the network cascading effect. It is computed 
in terms of network cascading score (${N}$) respective to VM $V_i$ during time-interval $\{t_a, t_b\}$ using Eq. (\ref{networkrisk}), where ${V}_i$ and ${V}_j$ are connected via legal access network link and hosted on different servers $S_k$ and $S_{k^\ast}$ such that $\forall i, j \in [1,Q], i \ne j$. If a malicious VM ${V}^{Mal}$ hosted on server $S_{k^\ast}$, is successful in compromising the VM $V_j$, then  it can compromise VM ${V}_i$ and all other VMs that are connected via common network by exploiting the network paths.  
\begin{equation}
	\int_{t_a}^{t_b}{N}_i dt = \int_{t_a}^{t_b}1-\Pi_{j=1}^Q(1-L_j\times \omega_{ik} \times \omega_{jk^\ast}) dt 	\label{networkrisk}
	\end{equation}

\section{Operational Design and Complexity}
MR-TPM utilizes values of current state of multiple security attack factors \{${R}_1, {R}_2, {R}_3, {R}_4, {R}_5$\} and three historical databases namely $(i)$ VMs' resource utilization \{$RU_1$, $RU_2$, ..., $RU_Q$\} $\in \mathds{RU}_{db}$, $(ii)$ user-records \{$U_1$, $U_2$, ..., $U_M$\} $\in \mathds{U}_{db}$ and $(iii)$
VM threats traces \{$\Xi_1$, $\Xi_2$, ..., $\Xi_n$\} $\in \mathds{Th}_{db}$. The set of users $U_1$, $U_2$, ..., $U_M$, servers $S_1$, $S_2$, ..., $S_P$ and $V_1$, $V_2$, ..., $V_Q$ are initialized followed by a mapping $\mathds{U} \times \mathds{V}\Rightarrow\mathds{S}$ among VMs, users and servers. The VMs are allocated to servers using some suitable VM placement strategy, for example, First-Fit Decreasing (FFD), Best-Fit, Greedy, Random-Fit etc. Thereafter, for each consecutive time-intervals \{$t_a$, $t_b$\}, current resource utilization of $V_1$, $V_2$, ..., $V_Q$ are passed as input into a workload predictor \cite{saxena2020proactive} trained with $\mathds{RU}_{db}$ to estimate their resource utilization during next time-interval. The threat status prediction is conducted for the VMs with predicted workload estimation ($W^p>0$). To predict future threat status of VM $V_i$, values of ${R}_1, {R}_2, {R}_3, {R}_4$ associated with $V_i$ are assessed by applying Eqs. (\ref{eq:vmvul})-(\ref{networkrisk}). The assessment of ${R}_5$ is done by analysing the behaviour of co-resident users of $V_i$ by using RFC based user classifier trained with $\mathds{U}_{db}$. The current score values of ${R}_1, {R}_2, {R}_3, {R}_4, {R}_5$ are fed as input into threat predictor ($\mathds{TP}^{\ast\ast}$) trained and tested with $\mathds{Th}_{db}$, to predict the future threat status ($\hat{V}^\Xi_i$) of $V_i$.
 Accordingly, the VMs with $\hat{V}^\Xi_i>0$ are migrated to server where $\hat{V}^\Xi_i=0$ by applying Eq. (\ref{mig}). The migration cost is computed using Eq. (\ref{mig_cost}), where $\mathds{D}(S_k ,S_j)$ is the distance or number of hops covered by migrating VM $V_{mig}$ from source ($S_k$) to destination server $S_j$, \{$j, k \in [1,P]$\}, $V_{mig} \in \mathds{TP}_{V}$, $\mathds{WW}(V_{mig})$ = $V_{mig}^{{CPU}} \times V_{mig}^{{Mem}}$ is the size of migrating VM, $\mathds{TP}_{V}$ is the list of VMs with `unsafe' status or VMs on overloaded server ($S_k$).  The first term $\sum{\mathds{CC}_{mig.j}{\mathds{D}(S_k ,S_j) *\mathds{WW}(V_{mig})}}$ signifies network energy consumed during VM migration. The second term $\sum{\mathds{G}_{j}{\mathds{B}_{j}}}$ specifies server state transition energy, where if $i^{th}$ VM is placed at $j^{th}$ server after migration, then $\mathds{CC}_{mig.j}=1$, otherwise $0$. If $j^{th}$ server receives one or more VMs after migration, then $\mathds{G}_{j}=1$ else it is 0. Similarly, if $\mathds{B}_{j}=0$, then $j^{th}$ server is  active before migration, otherwise, $\mathds{B}_{j}=\mathds{E}_{tr}$ where $\mathds{E}_{tr} = 4260 $ Joules which is energy consumed in switching a server from sleep to active state \cite{saxena2022fault, saxena2022sre}. 
 \begin{gather}\label{mig}
 {{V}_i}^{mig\_status}=\begin{cases}
 1  & {If(\hat{V}^\Xi_i>0)} \\
 0  & {\text{otherwise}} 
 \end{cases}
 \\ \label{mig_cost}
 \resizebox{0.48\textwidth}{!}{$  
 	\mathchorus{M_{c}}={\sum{\mathds{CC}_{mig.j}(\mathds{D}(S_k ,S_j)*\mathds{WW}(V_{mig}))} + \sum{\mathds{G}_{j}{\mathds{B}_{j}}}}
 	$}
 \end{gather}
The operational summary for proposed work is depicted in Algorithm \ref{algo-otp-mlb}.  
\begin{figure}[!htbp]
	\removelatexerror
\begin{algorithm}[H]
\caption{MR-TPM Operational Summary}
\label{algo-otp-mlb}
Initialize: $List_{\mathds{U}}$, $List_{\mathds{V}}$, $List_{\mathds{S}}$,  $\omega$\; 
Allocate $V_1$, $V_2$, ..., $V_Q$ to $S_1$, $S_2$, ..., $S_P$ by defining a mapping $\mathds{U} \times \mathds{V}\Rightarrow\mathds{S}$   \;		
\For {each time-interval $\{t_a, t_b\}$}{ 
     [$V^{Pred}_i$] $\Leftarrow$ Workload Prediction($V_i$) $\forall i \in \{1, 2, ...Q\}$ \;
      \eIf {($V^{Pred}_i>0$)}{
                    [$\hat{V}^\Xi_i>0$] $\Leftarrow$ Threat Predictor ($\mathds{TP}^{\ast\ast}$)\;
                   	\eIf{ $\hat{V}^\Xi_i ==$ `unsafe'}{                   	
                   	Migrate ${V}_i$ to server ${S}_k$ such that $\hat{V}^\Xi_i ==$ `safe'\;
                    Compute $\mathchorus{M_{c}}$ by applying Eq. (\ref{mig_cost})\;
                }
               {Keep ${V}_i$ at same server until user terminates it\;
             }          
 } 
{ VM threat prediction is not required\;}
}
\end{algorithm}
\end{figure}
 Step 1 initializes the list of VMs, servers, users (owners of these VMs)  producing a complexity of $\mathcal{O}(1)$. The time complexity of step 2 depends on the type of chosen VM placement policy. Steps 3-16 repeat for $Y$ time intervals, wherein the step 4 has complexity of $ \mathcal{O}(W)$ \cite{saxena2020proactive}. The complexity of step 6 is $T\Leftarrow \mathcal{O}(thzlogn)$, where $t$ is the number of trees, $h$ is the height of the trees, and $z$ is the number of non-missing entries in the training data. Prediction for a new sample consumes time $\mathcal{O}(th)$. Therefore, the total time-complexity of MR-TPM operational algorithm is $\mathcal{O}(YWT)$.

\section{Performance Evaluation }
\subsection{Experimental Set-up and Implementation}
The simulation experiments are executed on a server machine assembled with two Intel\textsuperscript{\textregistered} Xeon\textsuperscript{\textregistered} Silver 4114 CPU with 40 core processor and 2.20 GHz clock speed in Cloud data center simulation framework implemented in Python Jupyter Notebook. The computation machine is deployed with 64-bit Ubuntu 16.04 LTS, having main memory of 128 GB. The datacenter environment is set up with three different types of servers and four types of VMs configuration shown in Tables \ref{table:server} and \ref{table:vm}. The resource features like power consumption ($P_{max}, P_{min}$), MIPS, RAM and memory are taken from real server configuration; IBM \cite{IBM1999} and Dell \cite{Dell1999},  where $S_1$ is `ProLiantM110G5XEON3075', $S_2$ is `IBMX3250Xeonx3480' and $S_3$ is `IBM3550Xeonx5675'. Furthermore, the experimental VMs configurations are inspired from the VM instances of the Amazon website \cite{amazon1999EC2}. 

\begin{table}[htbp]
	\centering
	
	\caption[Table caption text] {Server Configuration}  
	\label{table:server}
	\resizebox{9cm}{!}{
		\begin{tabular}{lccccccc}
			\hline
			Server&PE&MIPS&RAM(GB)&Storage(GB)&$PW_{max}$&$PW_{min}$/$PW_{idle}$\\
			\hline
			$S_1$ 	& 2&2660&4&160&135&93.7 \\
			$S_2$	& 4&3067&8&250&113&42.3 \\
			$S_3$	& 12&3067&16&500&222&58.4 \\

			\hline
	\end{tabular}}
\end{table}

\begin{table}[htbp]
	\centering
	
	\caption[Table caption text] {VM Configuration}  
	\label{table:vm}
	\begin{tabular}{lcccc}
		\hline
		VM type& PE &MIPS&RAM(GB)&Storage(GB)\\
		\hline
		$v_{type1}$&1&500&0.5&40\\
		$v_{type2}$&2&1000&1&60\\
		$v_{type3}$&3&1500&2&80\\
		$v_{type4}$&4&2000&3&100\\

		\hline
	\end{tabular}
\end{table}
\subsection{Datasets and Simulation parameters}
MR-TPM is evaluated using two benchmark VM  traces from a publicly available real workload datasets: \textit{OpenNebula Virtual Machine Profiling Dataset} (ONeb) \cite{24mb-vt61-20} and \textit{Google Cluster Data} (GCD) \cite{reiss2011google}.   ONeb  provides  information  gathered by the monitoring system for six VMs  over 63 Hours via executing a set of probe programs provided by OpenNebula. It reports VM threats respective to  the server status, basic performance indicators, as well as VM status, and resource capacity consumption of server hosting these VMs. The exact values of VM and hypervisor vulnerability scores are not reported in the original VM threat database. Therefore, to prepare VM threat database including attributes: \{${V\_id}_i^{victim}$, ${S}\_id$, ${V\_id}^{Mal}$, ${V}_i^{CPU}$, ${V}_i^{BW}$, ${V}_i^{memory}$, ${R}^{score}_i$, $L_i$, ${H}_i$, ${C}_i$, ${N}_i$, ${V}_i^{status}$, ..., etc.\}, the VMs of ONeb dataset that have experienced attacks, are assigned vulnerability score higher than the threshold value of VM threat (which is considered 0.5 for the experiments) and the rest of the risk scores are computed using Eqs. (\ref{eq_VM_Hypr_threat})-(\ref{networkrisk}). These VM threats information is learned by the VM threat predictor for estimation of threats  on VMs before occurrence.
\par Also, we have utilized a realistic workload of Google Cluster Data (GCD)  \footnote{https://github.com/HiPro-IT/CPU-and-Memory-resource-usage-from-Google-Cluster-Data}, which provides resource usage percentage for each job in every five minutes over period of twenty-four hours. GCD contains capacity usage information of resources CPU, memory, disk I/O  request information of 672,300 jobs executed on 12,500 servers for the period of 29 days. The VM vulnerability ($L$) and server hypervisor vulnerability (${H}$) are generated in the range [0, 10] during VMs and the server's initialization.  Accordingly, the VM threat database reporting traces of attacks on GCD VMs, including attributes \{${V\_id}_i^{victim}$, ${S}\_id$, ${V\_id}^{Mal}$, ${V}_i^{CPU}$, ${V}_i^{BW}$, ${V}_i^{memory}$, ${R}^{score}_i$, $L_i$, ${H}_i$, ${C}_i$, ${N}_i$, ${V}_i^{status}$, ..., etc.\} is generated and updated at runtime according to requirement of the proposed model.
These datasets do not report user database and hence, we created a user database consisting of \{$U_{id}$, $Attack_{threshold}$, $U_{class}$\} and utilized it for user behaviour analysis based on their previous VM usage. The number of users is considered equals to 30\% of the number of VMs (i.e., 1200 VMs), who requested varying number and type of VMs over time. Therefore, different number and types of VMs are mapped with user at run-time and according to the risk scores associated to different VMs, threat is defined. Each user can hold VMs with a constraint that at any instance, the total number of VMs requests must not exceed total number of available VMs at the datacenter. The user database is created and updated during runtime. All the experiments are executed for 100 time-intervals of five minutes to analyse the performance of proposed model dynamically, though this period can be extended as per the availability of traces. The security threats are generated depending upon the threshold values for the four types of risks \{$L_i$, ${H}_i$, ${C}_i$, ${N}_i$\} associated with a VM and presence of the malicious ${V}^{Mal}$. The presence of some malicious user VM (${V}^{Mal}$) on a server and the risk scores corresponding to $i^{th}$ VM (${V}_i$) `greater than equal to' their respective threshold values indicate the high probability of security threat (i.e., $V_i^{\Xi} > 0$). 
\subsection{VM Cyberthreat Prediction} \label{results_prediction}
The VM threat prediction is performed for the different population of malicious user $U^{Mal}$, such as 5\%, 20\%, 50\% and 90\%. The prediction accuracy achieved during training, testing, and live phase for 5\% and 50\% $U^{Mal}$  over period of 500 minutes is shown in Fig. \ref{fig:accuracy} for both the datasets. It can be noticed  that prediction accuracy is closer to 98\% for all three phases, which is slightly increasing for live cyberthreat detection during each experiment because of the capability of online-learning and re-training of MR-TPM with the passage of time. To provide online training, we perform read/write operation of live VM threats in threat database file dynamically during period of 500 minutes. The Receiver Operator Characteristics (ROC)
curve obtained for different experiments using both the datasets are depicted in Fig. \ref{fig:roc}.   
 ROC curves for the $U^{Mal}=5\%$ is better than the ROC curves obtained with $U^{Mal}=50\%$ because of effective learning of true threats  in the presence of least number of malicious users. It is
observed that the proposed MR-TPM efficiently predicts VM threats for the test as well as live data in all the experiments for both datasets. 
\par Fig. \ref{fig:threatprediction} analyses the Actual Threat (AT), Predicted Threat (PT), and Unpredicted Threat (UT) for online VM threat prediction  in the presence of 5\%, and 50\% ${U}^{Mal}$ for both datasets. It can be observed that most of the VM threats are predicted correctly where UT is closer to zero and PT is closer to AT, indicating that along with all true threats, some false threats are also predicted. However, the difference between AT and PT is reducing over the time with enhancement of learning capability of VM threat predictor. 
\begin{figure*}[htbp]
	
	\centering
	\subfigure[GCD with ${U}^{Mal}$ = 5\% ]{\includegraphics[width=.24\textwidth]{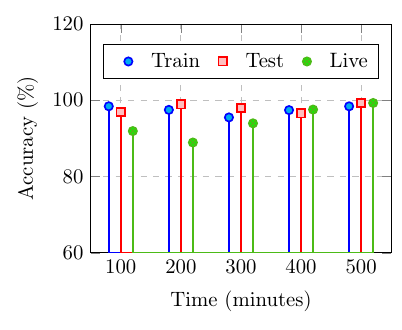}}\hfill
	\subfigure[GCD with ${U}^{Mal}$ = 50\% ]{\includegraphics[width=.24\textwidth]{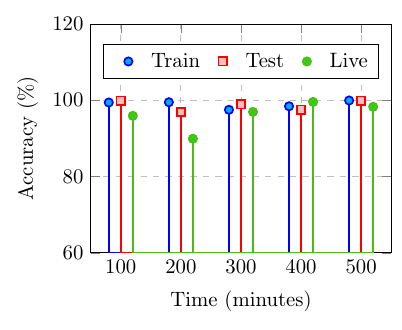}}\hfill
	\subfigure[ONeb with ${U}^{Mal}$ = 5\% ]{\includegraphics[width=.24\textwidth]{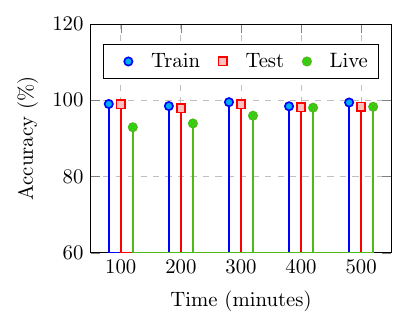}}
	\subfigure[ONeb with ${U}^{Mal}$ = 50\% ]{\includegraphics[width=.24\textwidth]{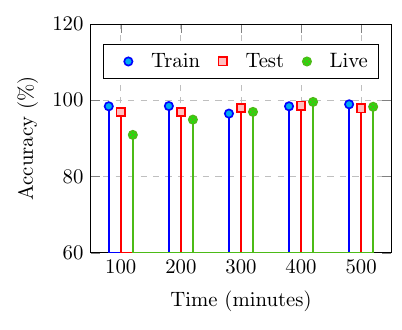}}
	
	\caption{Prediction Accuracy }
	\label{fig:accuracy}
	
\end{figure*} 
\begin{figure*}[htbp]
	
	\centering
	\subfigure[GCD with ${U}^{Mal}$ = 5\% ]{\includegraphics[width=.24\textwidth]{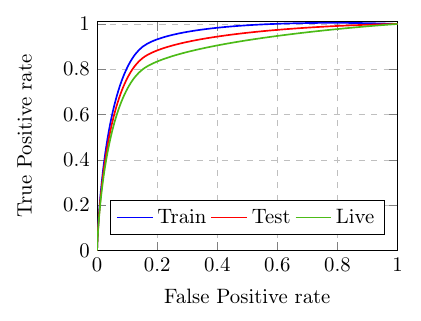}}\hfill
	\subfigure[GCD with ${U}^{Mal}$ = 20\% ]{\includegraphics[width=.24\textwidth]{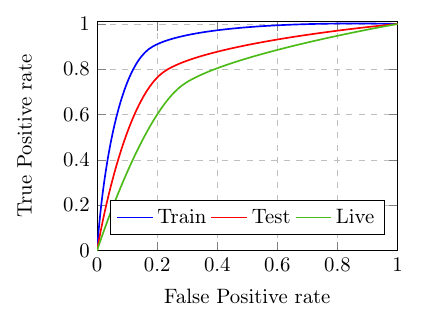}}\hfill
	\subfigure[ONeb with ${U}^{Mal}$ = 50\% ]{\includegraphics[width=.24\textwidth]{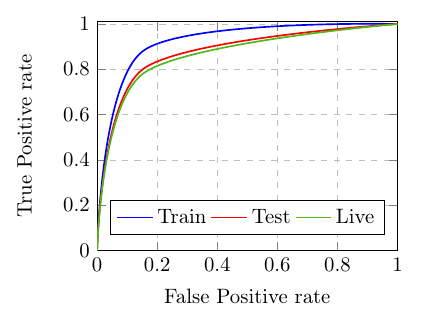}}
	\subfigure[ONeb with ${U}^{Mal}$ = 90\% ]{\includegraphics[width=.24\textwidth]{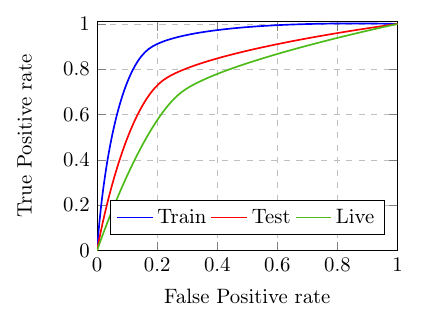}}
	
	\caption{ROC Curves}
	\label{fig:roc}
	
\end{figure*} 
\begin{figure*}[htbp]
	
	\centering
	\subfigure[GCD with ${U}^{Mal}$ = 5\% ]{\includegraphics[width=.24\textwidth]{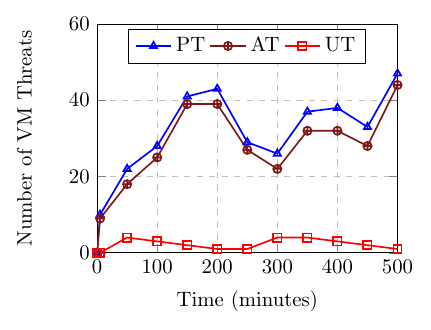}}\hfill
	\subfigure[GCD with ${U}^{Mal}$ = 50\% ]{\includegraphics[width=.24\textwidth]{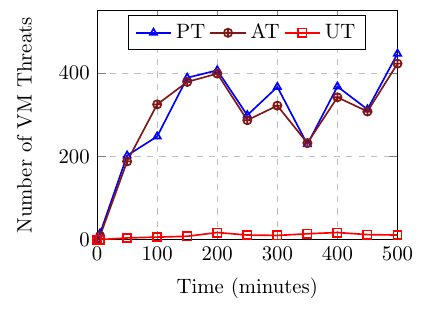}}
		\subfigure[ONeb with ${U}^{Mal}$ = 5\% ]{\includegraphics[width=.24\textwidth]{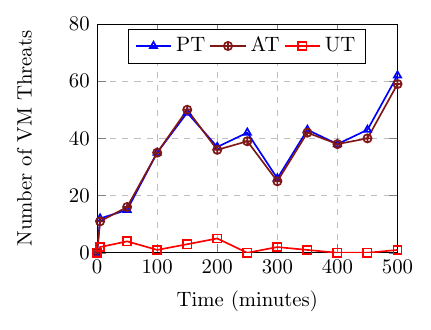}}\hfill
	\subfigure[ONeb with ${U}^{Mal}$ = 50\% ]{\includegraphics[width=.24\textwidth]{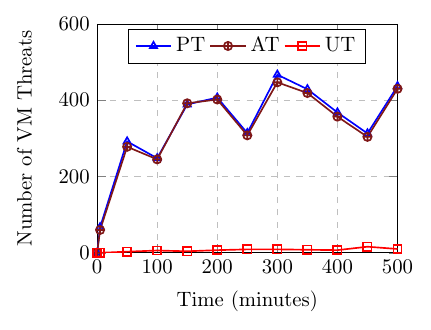}}	
	\caption{Number of threats (AT:Actual Threats, PT: Predicted Threats, UT: Unpredicted Threats) }
	\label{fig:threatprediction}
	
\end{figure*} 
\par  The values of precision, recall, F1 measure score, average of mean square error ($Avg. MSE$), average of mean absolute error ($Avg. MAE$) observed for the different experimental cases of both the datsets, including GCD and ONeb VM traces are shown in Table \ref{table:performance} and Table \ref{table:performance_nebula} which are consistently higher than $0.95$ for each case. The $Avg. MSE$ and $Avg. MAE$ values are observed in the range of [$ 0.0001-0.0008$] and [$ 0.001-0.010$], respectively, and the accuracy of prediction is higher than 96\% reaching up to 99.71\% and 99.25\% for the GCD and ONeb, respectively. The reason for such an accurate prediction is the incremental learning of MR-TPM with historical and live VM threat databases in real-time. Fig. \ref{fig:riskanalysis} shows the changes observed in the various risks scores \{$L$, $H$, $N$, $C$\} of a randomly selected VM among the 1200 VMs under simulation for a period of 500 minutes.
\begin{table}[!htbp]
	
	\caption[Table caption text] {Performance metrics for GCD VM traces}  
	\label{table:performance}
	\small
	\resizebox{9cm}{!}{
		\centering
		\begin{tabular}{|l|c|c|c|c|c|c|c|}
			\hline
			${U}^{Mal}$	&$time$ & \multicolumn{6}{c|}{Performance metrics} \\
			\cline{3-8} \cline{3-8}
			(\%)&$(min.)$& $Precision$ &$Recall$ &$F1 score$&${Avg. MSE}$&$Avg. MAE$&$Accuracy$ \\ \hline \hline			
			\multirow{5}{*}{5}&100&0.97&0.99&0.98&0.00021&0.0034 &98.87 \\ \cline{2-8}
			&200&0.99&0.99&1.00&0.00043&0.0294 &98.96\\ \cline{2-8}
			&300&1.00&0.99&0.96&0.00036&0.0052 &99.42\\ \cline{2-8}
			&400&0.99&0.97&1.00&0.00071&0.0073 &98.87 \\ \cline{2-8}
			&500&1.00&0.99&0.99&0.00191&0.0099 &99.71 \\ \hline \hline
			\multirow{5}{*}{20}&100&1.00&0.98&0.98&0.00062&0.0094 &97.07 \\ \cline{2-8}
			&200&0.99&0.99&0.99&0.00031&0.0044 &98.66\\ \cline{2-8}
			&300&1.00&1.00&1.00&0.00026&0.0025 &99.67\\ \cline{2-8}
			&400&0.99&0.97&1.00&0.00043&0.0069 &99.17 \\ \cline{2-8}
			&500&0.99&0.98&0.99&0.00061&0.0084 &99.96 \\ \hline \hline
			
			\multirow{5}{*}{50}&100&0.97&0.99&0.98&0.00039&0.0041 &98.16 \\ \cline{2-8}
			&200&0.98&0.96&0.99&0.00058&0.0064 &97.43\\ \cline{2-8}
			&300&1.00&1.00&1.00&0.00016&0.0015 &99.69\\ \cline{2-8}
			&400&0.99&0.98&0.98&0.00037&0.0049 &98.25 \\ \cline{2-8}
			&500&0.99&0.99&0.99&0.00031&0.0034 &98.17 \\ \hline \hline
			\multirow{5}{*}{90}& 100&1.00&0.99&1.00&0.00028&0.0041 &98.76 \\ \cline{2-8}
			&200&0.98&0.96&0.99&0.00058&0.0064 &99.74\\ \cline{2-8}
			&300&1.00&0.99&1.00&0.00014&0.0020 &98.69\\ \cline{2-8}
			&400&0.99&0.98&0.96&0.00029&0.0039 &99.25 \\ \cline{2-8}
			&500&1.00&0.99&0.99&0.00011&0.0014 &99.97 \\ \hline 
			
	\end{tabular}}
\end{table}
 \begin{table}[!htbp]
 	
 	\caption[Table caption text] {Performance metrics for OpenNebula VM traces}  
 	\label{table:performance_nebula}
 	\small
 	\resizebox{9cm}{!}{
 		\centering
 		\begin{tabular}{|l|c|c|c|c|c|c|c|}
 			\hline
 			${U}^{Mal}$	&$time$ & \multicolumn{6}{c|}{Performance metrics} \\
 			\cline{3-8} \cline{3-8}
 			(\%)&$(min.)$& $Precision$ &$Recall$ &$F1 score$&${Avg. MSE}$&$Avg. MAE$&$Accuracy$ \\ \hline \hline			
 			\multirow{5}{*}{5}&100&0.96&0.98&0.96&0.0007&0.0014 &99.10 \\ \cline{2-8}
 			&200&0.99&0.99&1.00&0.00023&0.0094 &99.06\\ \cline{2-8}
 			&300&.99&0.99&0.98&0.00045&0.0005 &99.10\\ \cline{2-8}
 			&400&1.00&0.97&0.99&0.00091&0.0023 &99.07 \\ \cline{2-8}
 			&500&0.99&0.98&1.00&0.00071&0.0006 &99.11 \\ \hline \hline
 			\multirow{5}{*}{20}&100&0.99&0.97&0.97&0.00062&0.0004 &98.16 \\ \cline{2-8}
 			&200&1.00&0.99&1.00&0.00051&0.0024 &97.96\\ \cline{2-8}
 			&300&1.00&1.00&0.99&0.00025&0.0015 &98.17\\ \cline{2-8}
 			&400&1.00&0.99&1.00&0.00022&0.0029 &99.17 \\ \cline{2-8}
 			&500&0.99&0.98&1.00&0.00021&0.0014 &99.01 \\ \hline \hline
 			
 			\multirow{5}{*}{50}&100&0.98&0.99&0.99&0.00031&0.0031 &98.82 \\ \cline{2-8}
 			&200&0.98&0.96&0.99&0.00058&0.0045 &98.03\\ \cline{2-8}
 			&300&1.00&1.00&1.00&0.00016&0.0021 &99.04\\ \cline{2-8}
 			&400&0.98&1.00&0.99&0.00027&0.0030 &99.25 \\ \cline{2-8}
 			&500&1.00&0.99&0.98&0.00040&0.0032 &98.85 \\ \hline \hline
 			\multirow{5}{*}{90}& 100&0.99&1.00&1.00&0.00108&0.0064 &97.91 \\ \cline{2-8}
 			&200&0.99&0.98&1.00&0.00078&0.0042 &98.23\\ \cline{2-8}
 			&300&0.99&0.97&0.99&0.00067&0.0038 &98.99\\ \cline{2-8}
 			&400&0.99&0.99&0.98&0.00029&0.0029 &99.06 \\ \cline{2-8}
 			&500&0.99&0.99&1.00&0.00016&0.0016 &99.17 \\ \hline 
 			
 	\end{tabular}}
 \end{table}
 
\begin{figure}[!htbp]
	\centering
	\includegraphics[width=0.6\linewidth]{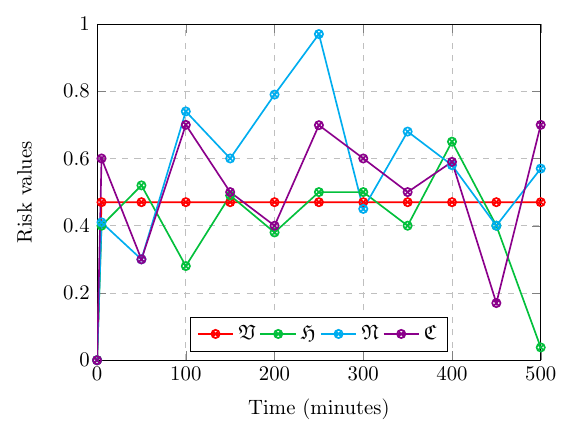}
	\caption{ Variation of multiple risk values for a VM}
	\label{fig:riskanalysis}
\end{figure}

\subsection{Deployment and Comparison}

To further analyse the efficiency of MR-TPM, it is deployed with existing state-of-the-art VM placement (VMP) policies, including Previously Selected Server First (PSSF) \cite{han2017using}, Secure and Energy Aware Load Balancing (SEA-LB) \cite{singh2019secure}, Security Embedded Dynamic Resource
Allocation (SEDRA) \cite{saxena2020security} and baseline VMP policies, including First-Fit Decreasing (FFD), Best-Fit (BF), and Random- Fit (RF). All the results shown in Section \ref{results_prediction} are derived with FFD VMP policy. 
\par Table \ref{table:attackcomparison} compares the average number of VM threats realised without and with MR-TPM (results are shown for the Live phase) integrated together with the above mentioned VMP policies. It can be observed that up to 88.5\%, 86.5\%, 86.2\%, 88.9\%, 88.5\% and 88.1\% threats are reduced with proposed approach over PSSF, SEA-LB, SEDRA, RF, BF and FFD, respectively, for $U^{Mal}\% = 90$ at $T (min)=500$.  
\begin{table*}[!htbp]
	
	\caption[Table caption text] {Comparison of number of threats without and with MR-TPM (Live phase) deployed with various VMP approaches }  
	\label{table:attackcomparison}
	\small
	
	\centering
	\resizebox{14cm}{!}{
	\begin{tabular}{|l|l||c|c||c|c||c|c||c|c||c|c||c|c|}
		\hline
		${U}^{Mal}$	&$T$ & \multicolumn{12}{c|}{Percentage of VM security threats ($\Xi$)} \\
		\cline{3-14} \cline{3-14}
		(\%)&$(min)$&\multicolumn{2}{c||}{PSSF \cite{han2017using}}
		&\multicolumn{2}{c||}{SEA-LB \cite{singh2019secure} } 
		&\multicolumn{2}{c||}{SEDRA  \cite{saxena2020security}}
		&\multicolumn{2}{c||}{RF}
		&\multicolumn{2}{c|}{BF}
		&\multicolumn{2}{c|}{FFD} \\ \cline{3-14}
		& &W-$\mathds{TP}$&$\mathds{TP}$&W-$\mathds{TP}$&$\mathds{TP}$&W-$\mathds{TP}$&$\mathds{TP}$&W-$\mathds{TP}$&$\mathds{TP}$&W-$\mathds{TP}$&$\mathds{TP}$&W-$\mathds{TP}$&$\mathds{TP}$\\ \hline \hline			
		\multirow{5}{*}{5}&100&116& 19 &206& 13&137&17 &276&16 & 283& 19 &318&58 \\ \cline{2-14}
		&200&203&33&187&27&169&19&296 &22&258&17&287&22 \\ \cline{2-14}
		&300&270&27&193&19&125&18&226 &17&238&16&308&26\\ \cline{2-14}
		&400&216&32&208&18&148&23&298&17&222&18&256 &23\\ \cline{2-14}
		&500&223&22&214&21&177&25&196 &18&236&17&312&29\\ \hline \hline
		\multirow{5}{*}{20}& 100 &365&26& 327& 54&244 &17& 474& 19&86.7&17.4&678&35\\ \cline{2-14}
		&200&376&23&364&68&314&24&494 &17&89.9&15.7&673&42 \\ \cline{2-14}
		&300&399&17&397&39&344&28&478&29&87.5&14.0&579 &44\\ \cline{2-14}
		&400&416&31&389&28&297&28&473 &26&86.4&17.2&598 &58\\ \cline{2-14}
		&500&402&26&428&49&308&37&501 &28&89.7 &16.9&657&39\\ \hline \hline
		
		\multirow{5}{*}{50}& 100 &537& 28&466&37& 376&24& 638& 27&584&27&779&56\\ \cline{2-14}
		&200&556&23&451&39&349&23&627 &29&595&24&767&49\\ \cline{2-14}
		&300&536&15&485&44&339&30&692 &27&603&38&797 &67\\ \cline{2-14}
		&400&547&37&509&35&388&26&701&28&586 &27&745&54\\ \cline{2-14}
		&500&533&41&487&46&373&34&694 &25&567&15&779&48\\
		\hline \hline
		\multirow{5}{*}{90}& 100 &783&57&766& 41&676 &56&893&65&837&76&958&108\\ \cline{2-14}
		&200&723&49&748&56&621&43&907 &84&878&89&997&99\\ \cline{2-14}
		&300&792&78&687&75&658 &49&958&98&889&94&946 &83\\ \cline{2-14}
		&400&712&62&673&69&684 &57&897&87&847&88&984&94\\ \cline{2-14}
		&500&728&84&678&91&633 &87&927&102&837&96&996&118\\ \hline 
		
	\end{tabular}}
\end{table*}
The resource utilization of datacenter can be obtained using Eqs. (\ref{ru1}), (\ref{ru2}), where $Z$ is the number of resources, $\omega_{ji}=\{0, 1\}$ is mapping between server ($S_i$) and VM ($V_j$). Though in formulation, only ${CPU}$ and ${Mem}$ are considered, it is extendable to any number of resources.
\begin{equation}
{RU}_{dc}= \int\limits_{\substack{t_1\\\mathcal{}}}^{t_2} (\frac{	{RU}_{dc}^{{CPU}} +  {RU}_{dc}^{{Mem}} }{|Z|\times \sum_{i=1}^{P}{\gamma_i}}) dt\label{ru1}
\end{equation}
\begin{equation}
{RU}_{dc}^{r}=\sum_{i=1}^{P}{\frac{\sum_{j=1}^{Q}{\omega_{ji} \times V_j^{r}}}{S_i^{r}}} \quad r \in {{CPU}, {Mem}, etc.} \label{ru2}
\end{equation}
 The resource utilization of different VMP integrated with MR-TPM follows the trend: $FFD \ge SEA-LB \ge SEDRA \ge PSSF \ge BF \ge RF$, as shown in Fig. \ref{fig:pw}a.

The power consumption for $i^{th}$ server can be formulated as $PW_i$ and total power consumption $PW_{dc}$ during time-interval \{$t_1$, $t_2$\} can be computed by applying Eq. (\ref{power2}), where ${PW_i}^{max}$, ${PW_i}^{min}$ and ${PW_i}^{idle}$ are maximum, minimum and idle state power consumption of $i^{th}$ server.
\begin{equation}
PW_{dc} = 
\int\limits_{\substack{t_1\\\mathcal{}}}^{t_2} \sum_{i=1}^{P} {[{PW_i}^{max} - {PW_i}^{min}]{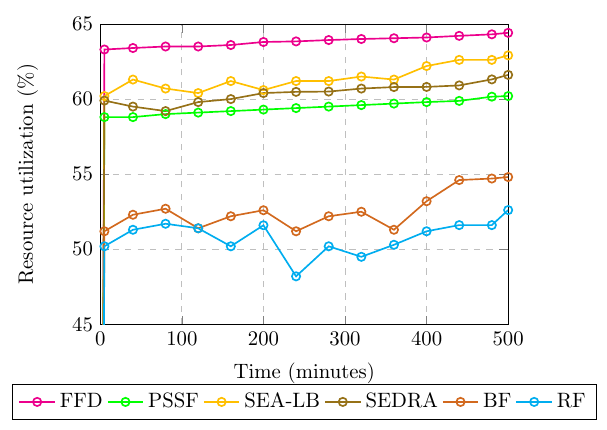} + {PW_i}^{idle}} dt
\label{power2}
\end{equation}	
 Fig. \ref{fig:pw}b shows the comparison of power consumption which is highest (i.e., 109.10 KW) for MR-TPM + PSSF and least (69.29 KW) for MR-TPM + FFD. The average number of active servers is compared in Fig. \ref{fig:pw}c , where MR-TPM + FFD and MR-TPM + PSSF operates at the lowest (118) and highest (774) number of active servers, respectively. Both the power consumption as well as the number of active servers follow the trend: $FFD < BF < SEDRA < SEA-LB <RF< PSSF$. The reason for the resultant trend is that VMs are tightly packed onto servers using FFD,  while in the case of others, sharing of servers is minimized for the purpose of security at the cost of the larger number of active servers.

\begin{figure*}[htbp]
	
	\centering
	\subfigure[Resource utilization]{\includegraphics[width=.31\textwidth]{RU}}\hfill
	\subfigure[Power consumption ]{\includegraphics[width=.31\textwidth]{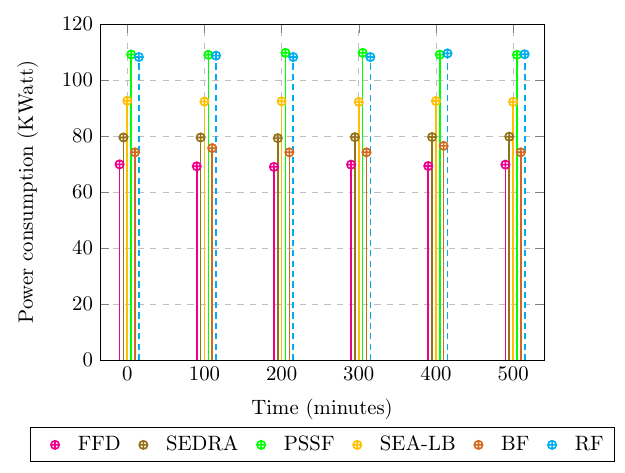}}\hfill
	\subfigure[Average number of active servers ]{\includegraphics[width=.31\textwidth]{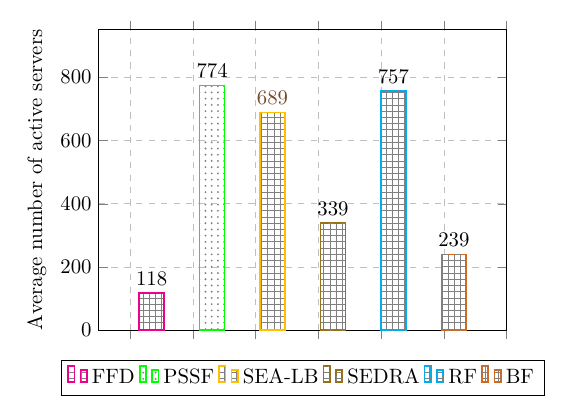}}
	
	\caption{Load management metrics}
	\label{fig:pw}
	
\end{figure*} 

\section{Conclusions and Future Work}
	To provide a comprehensive solution for secure workload distribution at cloud datacenter, a novel MR-TPM is proposed which estimates the future threats on user VMs by analysing multiple risk pathways, including VM and hypervisor vulnerabilities, co-residency, network cascading effects and user behaviour. The proposed model is periodically trained and retrained with historical and live threat data for accurate prediction of threat on VMs. MRTPM deployed with existing VM allocation policies minimizes  multiple risks based VM threats and related adversary breaches. The  performance evaluation of the proposed VM threat prediction model supports its efficacy in improving cybersecurity and resource efficiency over the compared approaches. In the future, MR-TPM can be extended with transfer learning to enhance its capabilities of analysing unknown/unseen security threats. Additionally, other possible security risk factors can be quantified and included to  improve the prediction approach of cyberthreats further.  
\section*{Acknowledgments}
The authors would like to thank the University of Aizu, Japan and the National Institute of Technology, Kurukshetra, India for financially supporting the research work.

\ifCLASSOPTIONcaptionsoff
\newpage
\fi

%

\bibliographystyle{IEEEtran}
\bibliography{mybibfile}
\begin{IEEEbiography}[{\includegraphics[width=0.7\linewidth]{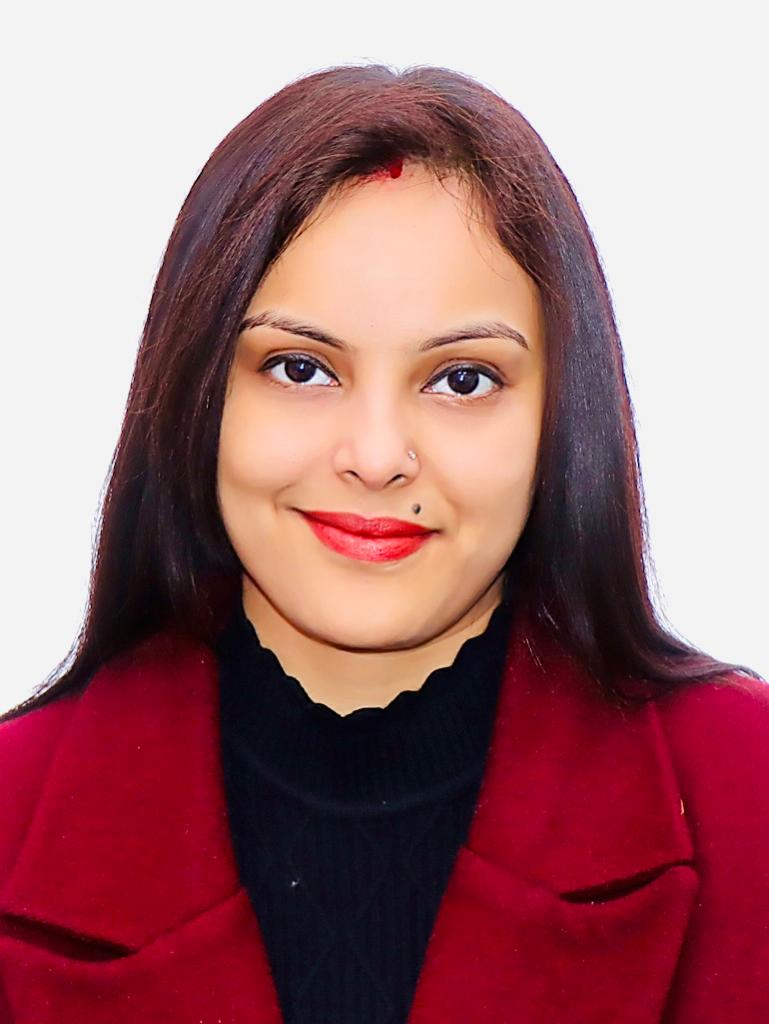}}]{Deepika Saxena}
	holds the position of an Associate Professor in the Division of Information Systems at the University of Aizu, Japan. Also, she is working as Part-time/Visiting Professor in the University of Economics and Human Sciences, Warsaw, Poland, Europe. She earned her Ph.D. degree in Computer Science from the National Institute of Technology, Kurukshetra, India, and completed her Post Doctorate from the Department of Computer Science at Goethe University, Frankfurt, Germany. She has been honored with the prestigious EUROSIM 2023 Best Ph.D. Thesis Award.  Her major research interests include Neural networks, Evolutionary algorithms,  Resource management, and Security in Cloud Computing, Internet traffic management, and Quantum machine learning, DataLakes, Dynamic Caching Management. Some of her research findings are published in top cited journals such as IEEE TSC, IEEE TSMC, IEEE TPDS,  IEEE TCC, IEEE Communications Letters, IEEE Networking Letters, IEEE Systems Journal,  IEEE Wireless Communication Letters, IEEE TNSM, IET Electronics Letters, Neurocomputing, etc.
\end{IEEEbiography}
\vskip 0pt plus -1fil 
\begin{IEEEbiography}[{\includegraphics[width=0.7\linewidth]{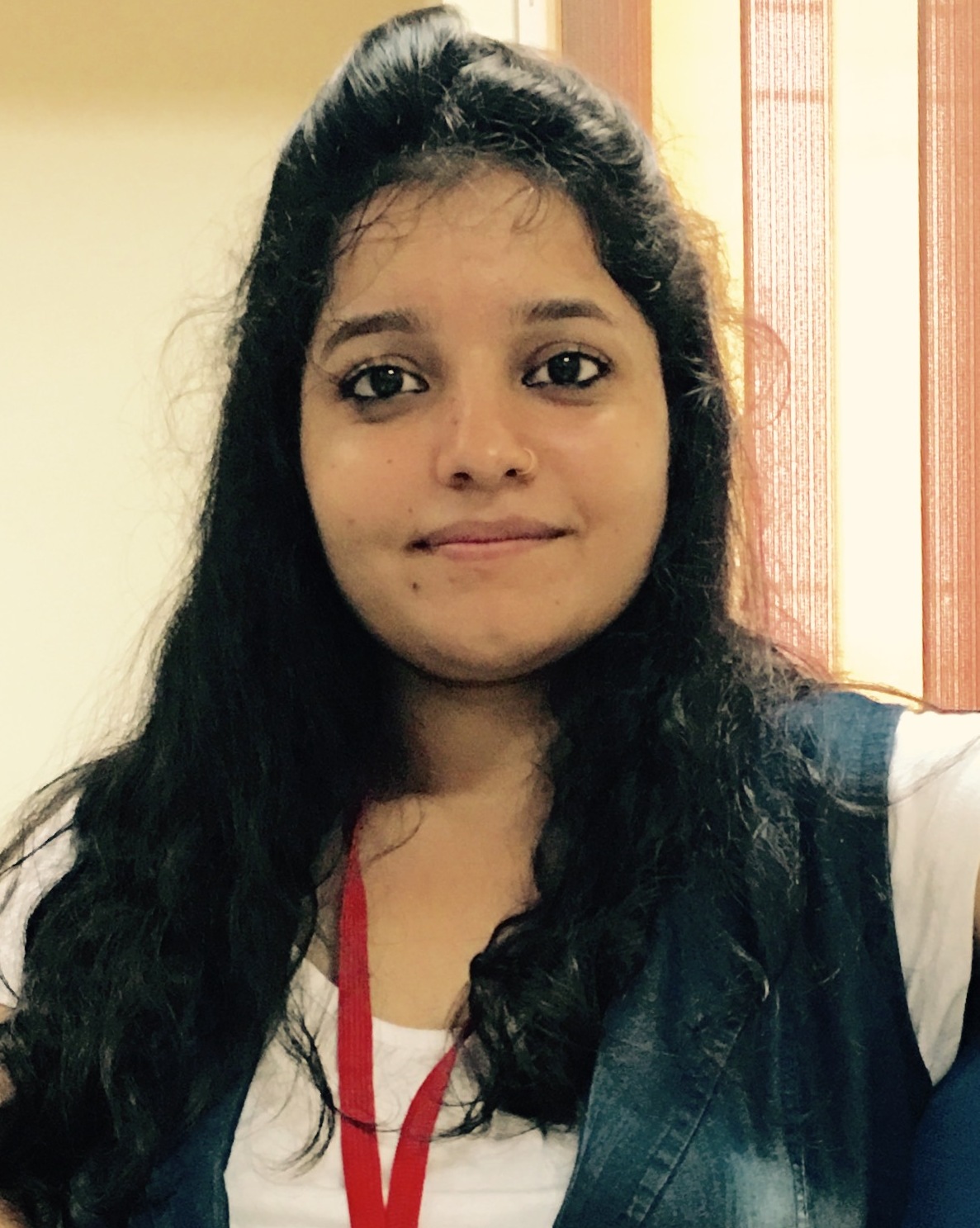}}]{Ishu Gupta} is working as a Ramanujan Faculty Fellow at Computer Science Department, IIIT, Bangalore, India. She completed her Ph.D. from NIT Kurukshetra, India with prestigious UGC-JRF-SRF Fellowships and Postdoc at Cloud Computing Research Center, NSYSU, Kaohsiung, Taiwan. Her major research interests include Cloud Computing, Cybersecurity, Artificial Intelligence, Quantum Machine Learning. She is recipient of Gold Medal for her master's degree and the ‘Excellent Paper Award’ (Twice).

\end{IEEEbiography}
\vskip 0pt plus -1fil 
\begin{IEEEbiography}[{\includegraphics[width=0.7\linewidth]{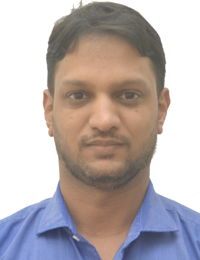}}]{Rishabh Gupta}
 received the MCA degree in Computer Science from Guru Jambheshwar University Science and Technology, Hisar, India and Ph.D. from the Department of Computer Applications, National Institute of Technology (NIT), Kurukshetra, India. Currently, he is a Post-doc fellow at The University of Aizu,  Japan. He is awarded the Senior Research Fellowship by the University Grants Commission, Government of India. His research interests include cloud computing, machine learning, and information security and privacy.\end{IEEEbiography}
\vskip 0pt plus -1fil 
\begin{IEEEbiography}[{\includegraphics[width=0.7\linewidth]{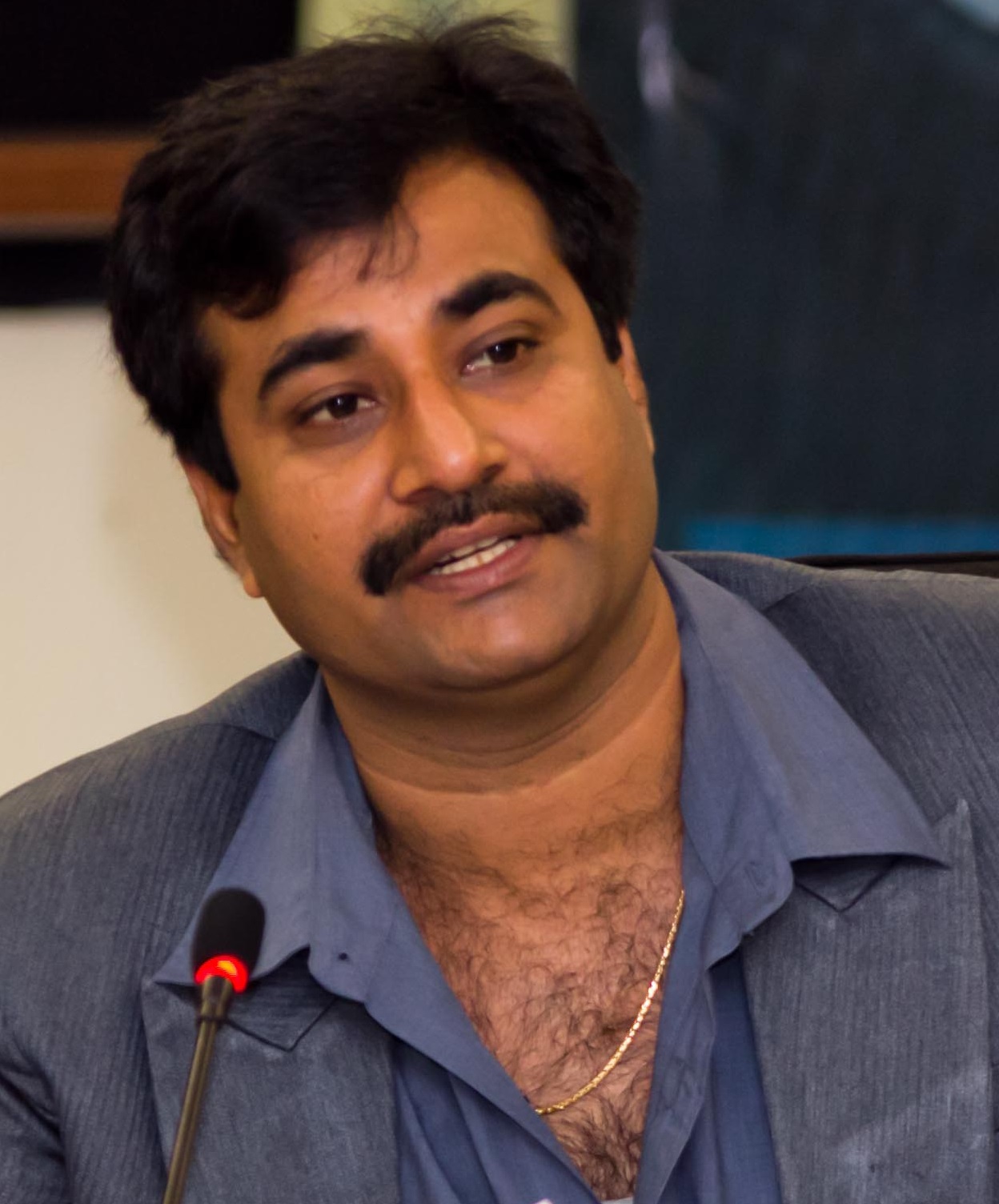}}]{Ashutosh Kumar Singh} is working as a Professor and Director of Indian Institute of Information Technology Bhopal, India. Also, he is working as Adjunct Professor in the University of Economics and Human Sciences, Warsaw, Poland.   He received his Ph.D. in Electronics Engineering from Indian Institute of Technology, BHU, India and Post Doc from Department of Computer Science, University of Bristol, UK. He has research and teaching experience in various Universities of the India, UK, and Malaysia. His research area includes Design and Testing of Digital Circuits, Data Science, Cloud Computing, Machine Learning, Security. He has published more than 370 research papers in different journals and conferences of high repute.  Some of his research findings are published in top cited journals such as IEEE TSC, IEEE TC, 	IEEE TSMC, IEEE TPDS, IEEE TII, IEEE TCC, IEEE Communications Letters, IEEE Networking Letters,  IEEE Design \& Test, IEEE Systems Journal, IEEE Wireless Communication Letters, IEEE TNSM, IET Electronics Letters,  FGCS, Neurocomputing, Information 	Sciences, Information Processing Letters, etc.
\end{IEEEbiography}
\vskip 0pt plus -1fil 
\begin{IEEEbiography}[{\includegraphics[width=0.7\linewidth]{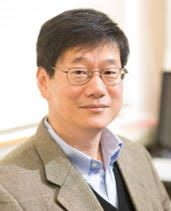}}]{Xiaoqing Wen}  received the Ph.D. degree from Osaka University, Japan, in 1993. He founded Dependable Integrated Systems Research Center in 2015 and served as its Director until 2017. His research interests include VLSI test, diagnosis, and testable design. He holds 43 U.S. Patents and 14 Japan Patents on VLSI testing. He is a fellow of the IEEE, a senior member of the IPSJ, and a senior member of the IEICE. He is serving as associate editors IEEE Transactions on VLSI and the Journal of Electronic Testing: Theory and Applications.
	
	
\end{IEEEbiography}



\end{document}